\newtheorem{Theorem}{Theorem}%[section]
\newtheorem{Corollary}{Corollary}%
\newtheorem{remark}{Remark}
\def\BibTeX{{\rm B\kern-.05em{\sc i\kern-.025em b}\kern-.08em
		T\kern-.1667em\lower.7ex\hbox{E}\kern-.125emX}}
\begin{document}
		\title{Hybrid Active and Passive Sensing for SLAM \\in Wireless Communication Systems}
		
	\author{Jie~Yang, \textit{Student Member, IEEE}, Chao-Kai~Wen, \textit{Senior Member, IEEE}, and Shi~Jin, \textit{Senior Member, IEEE}
	\thanks{Jie~Yang and Shi~Jin are with the National Mobile Communications Research Laboratory, Southeast University, Nanjing, China (e-mail: \{yangjie;jinshi\}@seu.edu.cn). Chao-Kai~Wen is with the Institute of Communications Engineering, National Sun Yat-sen University, Kaohsiung, 804, Taiwan (e-mail: chaokai.wen@mail.nsysu.edu.tw).}}
	
	\maketitle	
	
\begin{abstract}

Integrating sensing functions into future mobile equipment has become an important trend. Realizing different types of sensing and achieving mutual enhancement under the existing communication hardware architecture is a crucial challenge in realizing the deep integration of sensing and communication. In the 5G New Radio context, active sensing can be performed through uplink beam sweeping on the user equipment (UE) side to observe the surrounding environment. In addition, the UE can perform passive sensing through downlink channel estimation to measure the multipath component (MPC) information. This study is the first to develop a hybrid simultaneous localization and mapping (SLAM) mechanism that combines active and passive sensing, in which \emph{mutual enhancement} between the two sensing modes is realized in communication systems. Specifically, we first establish a common feature associated with the reflective surface to bridge active and passive sensing, thus enabling information fusion. Based on the common feature, we can attain physical anchor initialization through MPC with the assistance of active sensing. Then, we extend the classic probabilistic data association SLAM mechanism to achieve UE localization and continuously refine the physical anchor and target reflections through the subsequent passive sensing. Numerical results show that the proposed hybrid active and passive sensing-based SLAM mechanism can work successfully in tricky scenarios without any prior information on the floor plan, anchors, or agents. Moreover, the proposed algorithm demonstrates significant performance gains compared with active or passive sensing only mechanisms.
		
\end{abstract}

\begin{IEEEkeywords}
Active sensing, beam sweeping, integrated sensing and communication, passive sensing, simultaneous localization and sensing.
\end{IEEEkeywords}

\section{Introduction} \label{sec:introduction}

Future communication networks are expected to simultaneously realize massive device connection, high-speed data transmission, and high-precision localization and sensing\cite{6G1}.
The exploitation of high frequencies and wide bandwidths, and the dense deployment of massive antenna arrays enable the implementation of integrated sensing and communication (ISAC) techniques \cite{ISAC,ISAC2,ISAC3,I2}.
The convergence paradigms of radar and communication have been widely investigated \cite{liufan,yonina,I5}.
Furthermore, the in-depth integration of radar and communication by sharing the same waveforms and hardware platforms has been preliminarily verified by using real systems \cite{radar2,I4,heath,AS,radar}.
Integrating sensing and communication functions into a single hardware platform and a joint signal processing framework helps reduce hardware costs, power consumption, and deployment complexity.
The cooperation and mutual assistance of the two functions can be conveniently achieved by utilizing the high-throughput, low-latency information sharing capability of wireless communication systems \cite{Location-aware}.
However, with the deepening of integration, many problems need to be solved urgently for ISAC designs.
In this study, we focus on the following key issues.
The first one is how to implement the sensing function at different stages of communication without extra cost.
The second one is how to fuse the sensing results of different communication stages and realize mutual assistance.
The third issue is how to achieve high sensing accuracy under strict practical constraints.

%active and passive sensing%

Current communication systems consist of beam sweeping, channel estimation, and data transmission stages, which are supported by 3rd Generation Partnership Project 5G new radio (NR) \cite{beam}. 
By using communication waveforms, different sensing types can be implemented in the ISAC system.
The methods that 
sense the radio environment by sending beamformed signals and analyzing the target reflections are called active sensing \cite{radar2,I4,heath,AS,radar}.
A personal mobile radar was proposed in \cite{radar2} by integrating massive antenna arrays into 5G mobile user devices, and a grid-based Bayesian mapping approach was utilized.
Moreover, a user-centric millimeter-wave indoor sensing system that operates at $28$ GHz with $400$ MHz bandwidth was realized in \cite{AS,radar}, which verified the capability of orthogonal frequency division multiplexing (OFDM)-based 5G NR waveform in sensing. 
The methods that capture the targets and surrounding environment through the received multipath signals sent by other terminals are called passive sensing \cite{UWB,SOO,yj3,yj4,loc2,CSLAM, CSLAM2,BP2,slam2,MVA,I1}.
Position-related information in multipath components of radio frequency (RF) signals was evaluated in \cite{UWB,SOO}.
Many studies have focused on fine-grained channel state information (CSI), such as high-resolution angle of arrival (AOA), angle of departure (AOD), time of arrival (TOA), and frequency of arrival (FOA)  \cite{yj3,yj4,loc2}.
The above existing research focused on a single sensing type, but two sensing types can be simultaneously achieved in a communication system.
However, fusing different types of sensing results is difficult.
The solution is to build a bridge for information fusion of different sensing types; then, the complementary and mutual assistance between different sensing types can be investigated, which has not been performed in existing literature.

%SLAM%

In practical indoor scenarios, prior information about the environment is limited on the user side.
Sensing becomes increasingly challenging due to the complex multi-path propagation, severe path loss, and high probability of false alarm and missed detection. 
A feasible solution to solve tricky practical issues entails using multiple shots-based methods, which utilize measurements obtained in successive time slots.
A typical application is simultaneous localization and mapping (SLAM), which 
takes advantage of the unchanging nature of the location and state of radio features in the environment \cite{CSLAM, CSLAM2,slam2,BP2,MVA,I1 }.
Apart from radio-based SLAM, visual SLAM has achieved great success in estimating the trajectory of the camera while reconstructing the environment \cite{vslam}, where cameras can provide rich information of the environment that allows for robust and accurate place recognition. 
Besides, 
light detection and ranging (LiDAR)-based SLAM is one of the key technologies toward the success of autonomous driving and 3D imaging and measures distances by simply calculating the round-trip time of a laser pulse traveled to the target and back and offers centimeter-level resolution \cite{lidar}.
Although the resolution of radio is lower than that of camera or LiDAR, radio can cover a long detection range and is unaffected by weather and light conditions, which are the key challenges faced by visual and LiDAR SLAM. 
Accurate estimates of distance and angle are possible by exploiting antenna arrays and large bandwidth. Radio frequency signal designed for communication can help realize communication and sensing by completely reusing the communication hardware. Therefore, other devices in the communication network can participate in the collaboration-SLAM \cite{coslam}.

Radio features were abstracted into virtual anchors (VAs) in \cite{slam2,BP2}, which corresponded to different reflective surfaces and physical anchors.
The radio features were further abstracted into master virtual anchors (MVAs) in \cite{MVA}, which only corresponded to different reflective surfaces.
A sequential estimation of the states of a mobile agent and radio features was conducted in \cite{BP2,slam2,MVA}.
The estimation has low computational complexity and can cope with clutter, missed detections, and data associations.
However, the locations of physical anchors were assumed to be perfectly known to the mobile agent in \cite{BP2,slam2,MVA}.
Although the absolute location of agent can be obtained by formulating the relationship between agents, multipath measurements and the known floor plan\cite{I1}, this study focused on passive sensing only and did not further convert the active and passive sensing results into a general feature.
In reality, the anchors' states may be flexible (e.g., the anchors may be temporarily connected or disconnected).
Therefore, realizing SLAM in such challenging scenarios, in which the prior information of anchors and agents is unavailable, by the cooperation of active and passive sensing is worth exploring.

\begin{figure}
	\centering
	\vspace{-0.1cm}
	\includegraphics[scale=0.38]{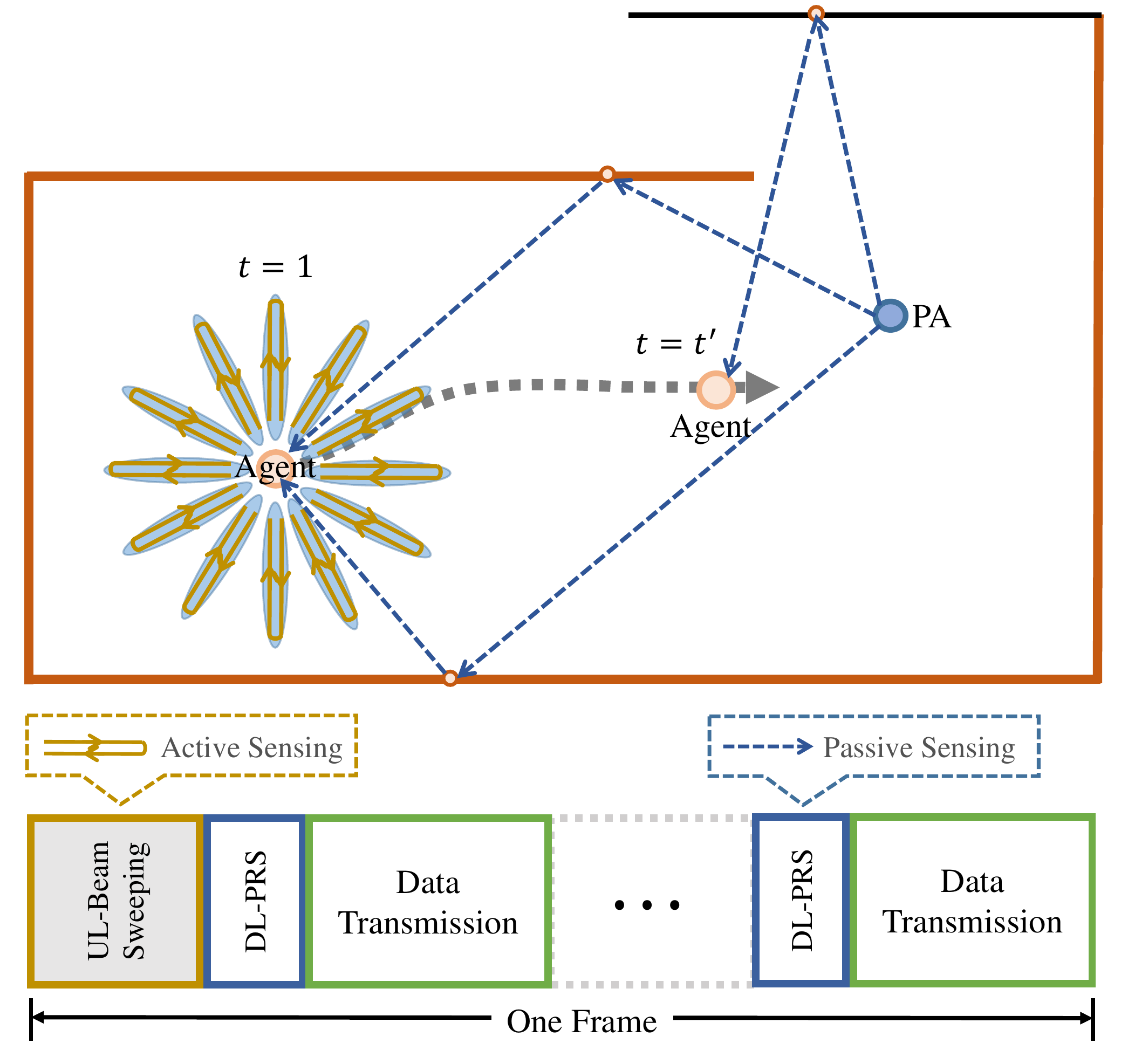}
	\caption{Scenario and frame structure of active and passive sensing in a wireless communication system. For illustration purposes, we only draw one PA. The grey dotted line is the trajectory of the agent, and we draw two positions of an agent at $t=1$ and $t=t'$.} \label{fig:frame}
	\vspace{-0.3cm}
\end{figure}

%Contribution%
This study aims to develop a novel framework of SLAM in which active and passive sensing can cooperate with each other.
To our best knowledge, this study is the first to
realize mutual enhancement between active and passive sensing in communication systems.
Different from recent studies on radar-communication coexistence \cite{r9} and dual-functional radar-communication \cite{r10,r11,r13,r14}, we do not change the traditional transmitter of the 5G NR communication systems but integrate radio sensing into the current communication-only mobile network. The related communication-centric ISAC studies are summarized in detail in \cite{r17}.  
Many communication-centric ISAC studies make progresses in the aspects of beamforming design, power allocation \cite{r19}, and intercarrier interference utilization \cite{r20}.
Our main contributions are presented in detail as follows:
\begin{itemize}
	\item 
	We establish a general theoretical model to describe the result of active sensing.
	The theoretical model builds a bridge between active and passive sensing.	
	Therefore, the results of active and passive sensing can be transformed into the same type of features. 		
	With the proposed theoretical model, we can obtain the soft information of the features obtained by active sensing, including the mean and variance of the estimated features.	
	Thus, soft information fusion can be realized between active and passive sensing.

	\item 
	We realize the mutual assistance of active and passive sensing by extending the classic belief propagation (BP)-based SLAM method\cite{BP2,slam2,MVA}.
	Our development realizes physical anchor initialization with the assistance of active sensing and achieves feature refinement with the help of passive sensing.
	Therefore, compared with state-of-the-art approaches, the proposed hybrid active and passive sensing-based SLAM can work successfully in more realistic and challenging scenarios without any prior information on the floor plan, anchors, or agents.

\end{itemize}

The rest of this paper is organized as follows. Section \uppercase\expandafter{\romannumeral2} introduces the system model to bridge active and passive sensing. We derive the theoretical model for active sensing in Section \uppercase\expandafter{\romannumeral3}. 
The proposed hybrid active and passive sensing-based SLAM mechanism is described in Section \uppercase\expandafter{\romannumeral4}.
Our simulation results are presented in Section \uppercase\expandafter{\romannumeral5}, and we conclude the study in Section \uppercase\expandafter{\romannumeral6}.

\section{System Model and Problem Formulation }\label{s2}

We study the mutual assistance of active and passive sensing in wireless communication systems. 
As shown in Fig. \ref{fig:frame},  localization and mapping are performed on the agent side. 
Therefore, 
active and passive sensing can be realized by beam sweeping and downlink positioning reference signals (DL-PRS), respectively, in the communication process \cite{DL-PRS,3GPP}.
According to 5G NR beam management \cite{beam}, the agent can sweep the beams for initial access.
Different from conventional beam sweeping in communication systems, the agent meanwhile  listens to the reflected signal by the receive antenna array, thus enabling sensing the surrounding environment.
We assume that transmit and receive antenna arrays are placed separately on the agent to relax the self-interference \cite{radar,2sets} \footnote{A number of mobile phones have configured multiple sets of mmWave antenna modules on a mobile phone to overcome the hand blockage effect. Therefore, the functionality of separating  transmission and receiving can be achieved through different antenna modules.}. 
Therefore, during beam sweeping, the agent sends RF signals in beams then captures the reflective surfaces from the echo signals, which serve as active sensing.
For passive sensing, the agent captures environment features
through the received signals sent from the anchors.
The multipath component (MPC) information in passive sensing contains the state and location of anchors and reflective surfaces.
Motivated by the observation that the location and state of the reflective surfaces obtained by active and passive sensing are strongly related under the same environment, 
we perform theoretical modeling of the representation of the reflective surface and unify the results of active and passive sensing with those of the reflective surface.

\subsection{Geometric Model}\label{gm}
We consider an indoor scenario with $K$ static physical anchors (PAs) and a mobile agent.
The state of the mobile agent at time $t$ is denoted as $\mathbf{u}_{t} = [\mathbf{p}_{{\rm u},t},\mathbf{v}_{{\rm u},t}]$, where $\mathbf{p}_{{\rm u},t}=[x_{{\rm u},t},y_{{\rm u},t}]$ indicates the location and $\mathbf{v}_{{\rm u},t}=[\dot{x}_{{\rm u},t},\dot{y}_{{\rm u},t}]$ represents the velocity.
Let $\mathbf{p}^{(k)}_{{\rm pa},t}$
represent the location of the $k$-th PA at time $t$ for $k=1,\ldots,K$.
We denote VAs as mirror images of PAs on the reflective surfaces, as illustrated in Fig. \ref{fig:floorplan}.
Therefore, one single-bounce specular non-line-of-sight (NLOS) path corresponds to one VA.
Let $L_{t}$ denote the number of reflective surfaces.
The location of the $l$-th VA of the $k$-th PA at time $t$ is represented by $\mathbf{p}^{(k,l)}_{{\rm va},t}$, where $l=2,\ldots,L_{t}$.

We can set any point on the 2-dimensional plane as the reference point (RP). For ease of notation, we let RP be the start point of the mobile agent and let $\mathbf{p}_{\rm rp} = [x_{\rm rp},y_{\rm rp}]$ denote the location of RP.
Similar to \cite{MVA}, we introduce virtual reference points (VRPs), which are mirror images of RP on the reflective surfaces.
We denote the location of the $l$-th VRP as $\mathbf{p}^{(l)}_{{\rm vrp}, t}$, where $l=1,\ldots,L_{t}$.
The relationship among PA, VA, RP, and VRP is shown in Fig. \ref{fig:floorplan}. 
Let unit vector $\overrightarrow{\mathbf{n}}_l$ denote the normal direction of the $l$-th reflective surface. We have
\begin{equation}\label{mva}
\overrightarrow{\mathbf{n}}_l = \dfrac{ \mathbf{p}^{(l)}_{{\rm vrp}, t} - \mathbf{p}_{\rm rp} }{\|\mathbf{p}^{(l)}_{{\rm vrp}, t} - \mathbf{p}_{\rm rp}\|} = \dfrac{ \mathbf{p}^{(k,l)}_{{\rm va},t} - \mathbf{p}^{(k)}_{{\rm pa},t} }{\|\mathbf{p}^{(k,l)}_{{\rm va},t} - \mathbf{p}^{(k)}_{{\rm pa},t}\|}.
\end{equation}
where $\|\cdot\|$ signifies the L$_2$-norm.
VRP contains the information about the reflective surface, that is, the normal direction of the reflective surface and the distance between the reflective surface and RP.
The line of PA and VA is parallel to the line of RP and VRP. 
On the one hand, given the position of VRP and PA, VA is calculated as
\begin{equation}\label{mva-pa}
\mathbf{p}^{(k,l)}_{{\rm va},t} = \mathbf{p}^{(k)}_{{\rm pa},t} - 2\langle\mathbf{p}^{(k)}_{{\rm pa},t}-\dfrac{\mathbf{p}^{(l)}_{{\rm vrp}, t}+\mathbf{p}_{\rm rp}}{2} ,\overrightarrow{\mathbf{n}}_l\rangle \overrightarrow{\mathbf{n}}_l,
\end{equation}
where $ \langle \mathbf{a},  \mathbf{b} \rangle$ represents the inner product of vectors $\mathbf{a}$ and $\mathbf{b}$.
On the other hand, 
given the position of PA and VA, VRP is calculated as
\begin{equation}\label{pa-va}
\mathbf{p}^{(l)}_{{\rm vrp}, t} = \mathbf{p}_{\rm rp} - 
2\langle\mathbf{p}_{\rm rp}-\dfrac{\mathbf{p}^{(k)}_{{\rm pa},t}+\mathbf{p}^{(k,l)}_{{\rm va},t}}{2} ,\overrightarrow{\mathbf{n}}_l\rangle \overrightarrow{\mathbf{n}}_l,
\end{equation}
According to \eqref{mva-pa}, given the position of VRP and VA, PA is calculated as
\begin{equation}\label{mva-va}
\mathbf{p}^{(k)}_{{\rm pa},t} = \mathbf{p}^{(k,l)}_{{\rm va},t} - 2\langle\mathbf{p}^{(k,l)}_{{\rm va},t}-\dfrac{\mathbf{p}^{(l)}_{{\rm vrp}, t}+\mathbf{p}_{\rm rp}}{2} ,\overrightarrow{\mathbf{n}}_l\rangle \overrightarrow{\mathbf{n}}_l.
\end{equation}
Notably,  \eqref{mva-pa}-\eqref{mva-va} are equivalent.

\begin{figure}
	\centering\vspace{-0.2cm}
	\includegraphics[scale=0.52]{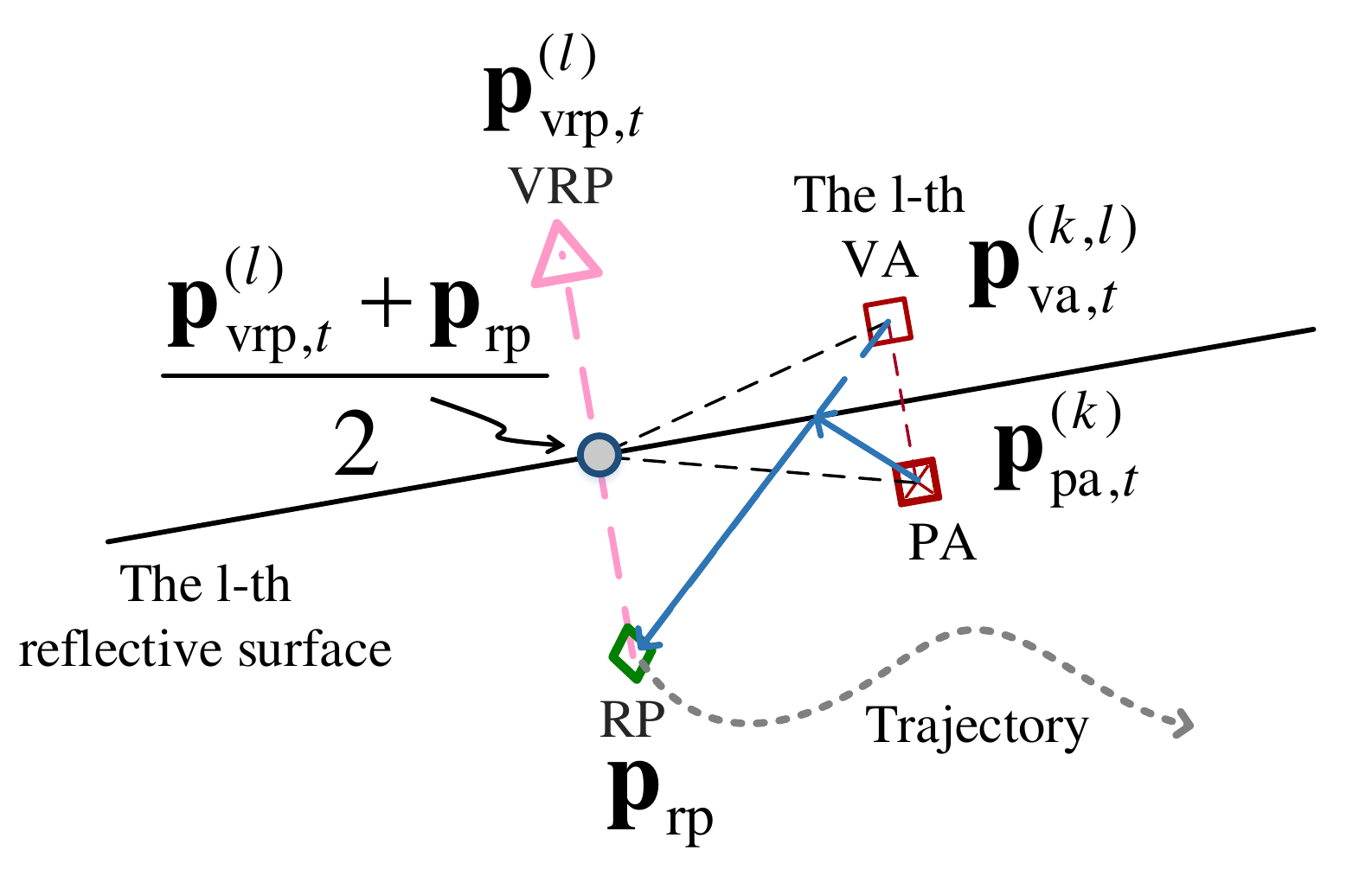}
	\caption{Relationship among PA, VA, VRP and RP. PA is denoted by a crossed box, VA is represented by a square, and VRP is denoted by a triangle.
		The trajectory of a mobile agent is depicted by a grey dotted line,	with RP being the start point.} \label{fig:floorplan}\vspace{-0.3cm}
\end{figure}

VRPs have three characteristics, namely, static location, changing presence, and reduced number.
The first characteristic is due to the assumption that the reflective surfaces are static.
The second characteristic arises from the fact that the agent encounters different reflective surfaces while moving.
As shown in Fig. \ref{fig:frame}, the black reflective surface is not detected at ${t=1}$ but is detected at ${t=t'}$; thus, the reflective surface is not present at ${t=1}$ but is present at ${t=t'}$.
The third characteristic is that VRP is determined by the reflective surface regardless of PAs.
Multiple VAs of different PAs may correspond to the same reflective surface.
Therefore, 
introducing VRPs can describe the same radio environment with less data than VAs.
Moreover,
one VA corresponds to one VRP, and different VAs of the same reflective surface share the same VRP.
Hence, introducing VRPs enables data fusion of multipaths from different PAs.

\subsection{Active Sensing}\label{AS}
We consider a mobile agent that has $N_{\rm b}$ uplink beams.
For a beam direction $\phi_i$, where $i = 1,\ldots, N_{\rm b}$, we assume that the agent transmits a sequence of beamformed sounding reference signals (SRSs) \cite{3GPP} on $N_{\rm s}$ active subcarriers, with ${x}_n$ denoting the pilot symbol at the $n$-th subcarrier in the OFDM symbol. 
In particular, the agent transmits SRS at the uplink beam direction $\phi_i$ while then observing the reflected signal at the same beam direction. Accordingly, the reflected signal reads \cite{AS,radar} 
\begin{equation}\label{reflection}
		{y}_{n,i} = \mathbf{a}^{\rm H}_{\rm RX}(\phi_i)\mathbf{H}_{n}\mathbf{a}_{\rm TX}(\phi_i){x}_{n} + {v},
\end{equation}
where $\mathbf{a}_{\rm RX}(\cdot)$ and $\mathbf{a}_{\rm TX}(\cdot)$ are steering vectors and $v$ is the additive Gaussian noise. We have 
\begin{equation}\label{channel}
	\mathbf{H}_{n} = \sum\limits_{m=1}^{M} \mathbf{a}_{\rm RX}(\phi_m)\Gamma_n(d_m)\mathbf{a}^{\rm H}_{\rm TX}(\phi_m),
\end{equation}
and 
\begin{equation}\label{gamma}
    \Gamma_n(d_m) = \frac{b_{n,m}}{d_m^2}e^{-j2\pi (n-\frac{N_{\rm s}}{2}) \Delta f \frac{2d_m}{c}}, \ {\rm with}\ \lvert b_{n,m} \rvert^2 = \frac{\lambda_n^2 \varepsilon_m}{(4\pi)^3},
\end{equation}	
where $M$ is the number of paths,
$d_m$ is the distance from the mobile agent to the reflection point on the reflective surface (RSP),
$\Delta f$ is the subcarrier spacing, and
$\lambda_n$ is the wavelength of the $n$-th subcarrier.
$\varepsilon_m$ is the radar cross section (RCS),
which we model by using $\varepsilon_m = \gamma \cos^{2\eta} \psi_m$, where $\psi_m$ is the angle between the incident wave and the normal direction of the reflective surface, in accordance with \cite{rcs} (Sec. 9.7.3).
In addition,
$\gamma$ and $\eta$ are model coefficients, and $c$ is the speed of light.
Therefore,
in the beam sweeping stage, we can obtain $d_m$ through range-angle processing methods \cite{radar}, where $m = 1,\ldots, M$.

For active sensing, the geometry relationship between the agent and the $m$-th RSP can be expressed as 
\begin{equation}\label{rsp}
	\mathbf{p}_{{\rm rsp},m} = \mathbf{p}_{{\rm u},t} + d_m[ \cos\phi_m,\sin\phi_m ]. 
\end{equation}
With more than two RSPs, we can determine the corresponding reflective surface.
Given that one VRP characterizes one reflective surface,
we take one VRP ($\mathbf{p}_{\rm vrp} = [x_{\rm vrp}, y_{\rm vrp}]$) and two RSPs ($\mathbf{p}_{{\rm rsp},1} = [x_{1}, y_{1}]$ and $\mathbf{p}_{{\rm rsp},2} = [x_{2}, y_{2}]$) as an example.
The relationships between VRP and RSPs are given by
\begin{equation}\label{vertical}
    (x_{\rm vrp}-x_{\rm rp})(x_{2}-x_{1})+(y_{\rm vrp}-y_{\rm rp})(y_{2}-y_{1}) = 0,
\end{equation}
and
\begin{equation}\label{collineation}
    \frac{\frac{x_{\rm vrp}+x_{\rm rp}}{2}-x_{1}}{\frac{y_{\rm vrp}+y_{\rm rp}}{2}-y_{1}} =\frac{x_{2}-x_{1}}{y_{2}-y_{1}}.
\end{equation}	 
As shown in Fig. \ref{fig:vrp}, \eqref{vertical} is established by the vertical relationship between vectors $\mathbf{p}_{{\rm rsp},1}-\mathbf{p}_{{\rm rsp},2}$ and $\mathbf{p}_{\rm vrp}-\mathbf{p}_{\rm rp}$,
and \eqref{collineation} is established according to the collinearity of three points, namely, $\mathbf{p}_{{\rm rsp},1}$, $\mathbf{p}_{{\rm rsp},2}$, and $(\mathbf{p}_{\rm vrp}+\mathbf{p}_{\rm rp})/2$.
Therefore, 
with active sensing to obtain two RSPs, we can derive VRP with  \eqref{rsp}-\eqref{collineation}. 

\begin{figure}
	\centering%\vspace{-0.1cm}
	\includegraphics[scale=0.45]{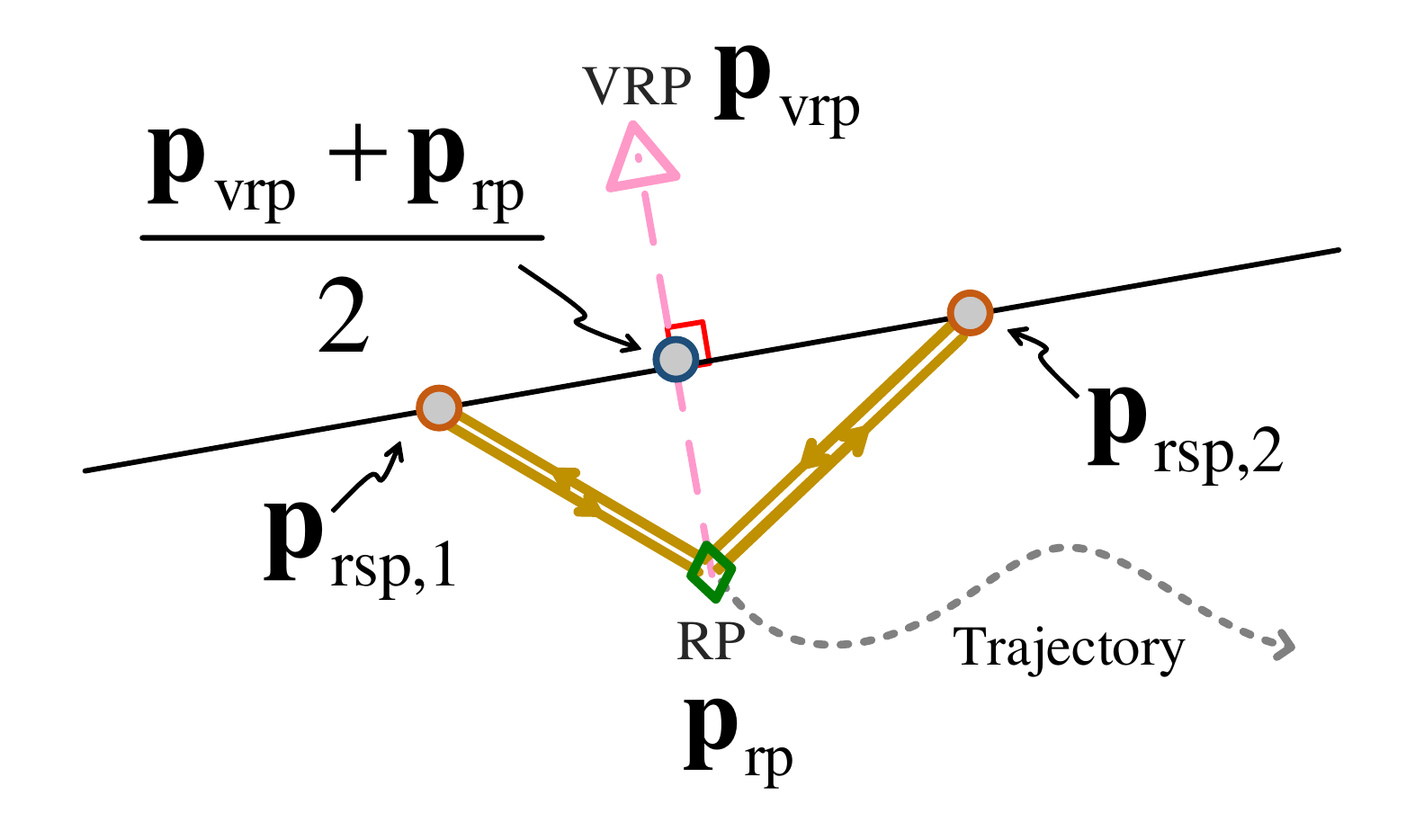}
	\caption{Relationship between RSPs and  VRP. Two RSPs and one VRP are illustrated.} \label{fig:vrp}\vspace{-0.3cm}
\end{figure}
	
\subsection{Passive Sensing}
In the 5G NR context, the $k$-th PA can be configured with a set of DL-PRS that are periodically transmitted to the mobile agent for positioning \cite{3GPP} for $k = 1, \ldots,K$.
The signal is transmitted over a multipath
channel with ${L}_{t}^{(k)}$ paths \cite{multipath}, but we only consider the line-of-sight (LOS) and single-bounce specular NLOS paths.
The AOAs, AODs, and TOAs of ${L}_{t}^{(k)}$ paths can be extracted from the received signal by advanced channel parameter extraction algorithms \cite{hy,nomp}.
For ease of expression, we consider only TOAs. The system model can be easily extended to solve cases with AOAs.
Let $\tau_{t,l}^{(k)}$ denote the TOA of the $l$-th path from the $k$-th PA at time slot $t$.
The geometry relationship between the agent and the $k$-th PA is given by \footnote{In this study, we assume that the agent and the physical anchor are synchronized, and the proposed mechanism can be expanded to the scenario where the clock bias is unknown. For the case where the clock bias between the agent and the physical anchor is a  constant, it can be viewed as an unknown attribute of the agent, and can be estimated with time. The detailed process is explained and realized in our previous work \cite{yj5}.  }
\begin{equation}\label{TOA}
c\tau_{t,1}^{(k)} ={\big\lVert  \mathbf{p}_{{\rm u},t}-\mathbf{p}^{(k)}_{{\rm pa},t} \big\lVert}.
\end{equation} 
The geometry relationship between the agent and the $l$-th VA is given by
\begin{equation}\label{TOA1}
c\tau_{t,l}^{(k)} ={\big\lVert  \mathbf{p}_{{\rm u},t}-\mathbf{p}^{(k,l)}_{{\rm va},t} \big\lVert},
\end{equation} 
where $c$ is the speed of light.
Let $\mathcal{M}_{t}^{(k)}$ represent a set of obtained TOA measurement indexes of the $k$-th PA at time $t$ and let
$\hat{\tau}_{t,l}^{(k)}$ denote the extracted TOA, where $l = 1,\ldots,|\mathcal{M}_{n}^{(m)}|$.
Notably, $|\cdot|$ denotes the number of elements in the set or the amplitude of a complex value.
We can then define the stacked measurement vectors
$\mathbf{z}_{t}^{(k)}=[\hat{\tau}_{t,1}^{(k)},\cdots,\hat{\tau}^{(k)}_{t,|\mathcal{M}_{t}^{(k)}|}]$ and
$\mathbf{z}_{t}=[\mathbf{z}_{t}^{(1)},\ldots,\mathbf{z}_{t}^{(K)}]$.
Through the accumulation of $T$ time slots, the agent can obtain a sequence of measurements $\mathbf{z}_{1:T}=[\mathbf{z}_{1},\ldots,\mathbf{z}_{T}]$.
The goal of passive sensing is to obtain the positions of PAs and VAs and the agent trajectory by measurements $\mathbf{z}_{1:T}$.
Therefore, with passive sensing to obtain PAs and VAs, we can derive VRP according to \eqref{pa-va}.
Hence, VRP acts as a bridge between active and passive sensing.

\subsection{Problem Formulation}
We consider a scenario where a mobile agent enters an unfamiliar indoor environment, and the positions of PAs and reflective surfaces are unknown to the agent.
Given that GPS is blocked indoors, the trajectory of the agent is also unknown.
The start point of the agent is viewed as RP.
Then, the agent gradually establishes the environment geometry relative to RP.

The agent performs beam sweeping at the start point for initial access and active sensing.
As shown in Fig. \ref{fig:frame}, the agent can obtain the information of the surrounding environment (the walls in red are the environment sensed by active sensing).
However, active sensing faces two challenges.
First, its sensing accuracy is affected by distance, while, a general model that describes the accuracy of active sensing is currently lacking. Therefore, we propose an uncertainty model for active sensing  (Section \ref{UM}).
Second, active sensing is inaccurate for distant targets and cannot handle obscured targets, such as the black wall in Fig. \ref{fig:frame}. Therefore, we propose a mechanism in which passive sensing can play an important role in refining the active sensing result (Section \ref{A&P}).

After the agent is connected to the PAs, passive sensing starts to work through a sequence of measurements $\mathbf{z}_{1:T}$.
With the VRPs obtained by active sensing, multi-paths of the same PA can be associated to obtain the initial PA position according to
\eqref{mva-va} (Section \ref{initial}).
When the agent moves along the trajectory, it has the chance to detect new reflective surfaces.
As shown in Fig. \ref{fig:frame}, the black wall is detected by passive sensing when $t=t'$.
Moreover, the estimates of VRPs and PAs 
become increasingly accurate over time (Section \ref{refine}).

With the help of active sensing, the locations of PAs do not need to be known.
With the assistance of passive sensing, 
the VRPs that correspond to new reflective surfaces can be accurately detected without being very close to the reflective surfaces, and the entire SLAM result is improved, as explained in Section \ref{result}.

\section{Uncertainty Model for Active Sensing }\label{UM}	

In this section, we establish a theoretical model that describes the accuracy of active sensing.
First, we model distance uncertainty. Second, we convert distance uncertainty to VRP uncertainty.  

\subsection{Distance Uncertainty}
According to Section \ref{AS}, the received reflection signal is given by \eqref{reflection}.
Considering that ${x}_{n}$ is the known pilot sequence and $\mathbf{a}^{\rm H}_{\rm RX}(\phi_m)\mathbf{a}_{\rm RX}(\phi_m)=1$ and $\mathbf{a}^{\rm H}_{\rm TX}(\phi_m)\mathbf{a}_{\rm TX}(\phi_m)=1$, we obtain
	\begin{equation}\label{ren}
	{r}_{n,m} = \Gamma_n(d_m) + {\tilde{v}},
	\end{equation}
	where ${\tilde{v}}$ follows a Gaussian distribution with zero mean and variance $\sigma^2$. 
	For $N_{\rm s}$ subcarriers, we
	denote $\mathbf{r}_{m}=[{r}_{1,m},\ldots,{r}_{N_{\rm s},m}]^{\rm T}$. Then, we extract the distance parameter $d_m$ from the received signal $\mathbf{r}_{m}$.
	
	According to \cite{KAY}, the information inequality for the variance of any unbiased estimate $\hat{d}_m$ reads  
    \begin{equation}\label{crlb}
    {\rm var}\{\hat{d}_m\} \geq {\rm F}^{-1}(d_m),
    \end{equation}
    where ${\rm F}(d_m)$ denotes the Fisher information matrix of $d_m$.
    ${\rm F}^{-1}(d_m)$ is also called the Cram\'{e}r Rao lower bound (CRLB).
    According to \cite{KAY}, ${\rm F}^{-1}(d_m)$ is defined by
    \begin{equation}\label{fim}
    {\rm F}(d_m) = \frac{2}{\sigma^2} \mathcal{R}\left\{
    \dfrac{\partial \textbf{r}_{m}^{\text{H}} }{\partial d_m}  \dfrac{\partial \textbf{r}_{m} }{\partial d_m} 
    \right\}.
    \end{equation}
    After some tedious calculation, we obtain the following theorem.    
    \begin{Theorem}\label{T1}
    	Let ${\rm SNR}_{m} = \dfrac{ \lambda^2 \varepsilon_m }{(4\pi)^3 d_m^4 \sigma^2}$ and $B = N_{\rm s}\Delta f$ denote the bandwidth. Then, the information inequality for the distance variance is given by 
    	\begin{equation}\label{variance}
    	{\rm var}\{\hat{d}_m\} \geq  \bigg( \dfrac{8\pi^2B^2N_{\rm s}}{3c^2} {\rm SNR}_{m} \bigg)^{-1}.
    	\end{equation}   	 
    \end{Theorem}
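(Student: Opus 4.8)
The plan is to evaluate the Fisher information ${\rm F}(d_m)$ in \eqref{fim} in closed form and then invoke the information inequality \eqref{crlb}. Since the per-subcarrier model $r_{n,m}=\Gamma_n(d_m)+\tilde v$ has circularly symmetric complex Gaussian noise that does not depend on $d_m$, we have $\partial\mathbf{r}_m/\partial d_m=[\partial\Gamma_1/\partial d_m,\ldots,\partial\Gamma_{N_{\rm s}}/\partial d_m]^{\rm T}$, so \eqref{fim} reduces to the scalar sum ${\rm F}(d_m)=\frac{2}{\sigma^2}\sum_{n=1}^{N_{\rm s}}\bigl|\partial\Gamma_n(d_m)/\partial d_m\bigr|^2$, which is the quantity I would compute.

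First I would differentiate $\Gamma_n(d_m)=\frac{b_{n,m}}{d_m^2}e^{-j2\pi(n-N_{\rm s}/2)\Delta f\cdot 2d_m/c}$ by the product rule. Writing $\omega_n=\frac{4\pi(n-N_{\rm s}/2)\Delta f}{c}$, one contribution comes from the $d_m^{-2}$ amplitude decay and one from the phase, giving $\partial\Gamma_n/\partial d_m=-b_{n,m}d_m^{-2}e^{-j\omega_n d_m}\bigl(2/d_m+j\omega_n\bigr)$ and hence $|\partial\Gamma_n/\partial d_m|^2=|b_{n,m}|^2\bigl(4/d_m^6+\omega_n^2/d_m^4\bigr)$. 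Next I would apply the narrowband simplification $\lambda_n\approx\lambda$, so $|b_{n,m}|^2\approx\lambda^2\varepsilon_m/(4\pi)^3$ is constant over the subcarriers and factors out; recognizing the definition of ${\rm SNR}_m$, this gives ${\rm F}(d_m)\approx 2\,{\rm SNR}_m\bigl(4N_{\rm s}/d_m^2+\sum_{n}\omega_n^2\bigr)$. The amplitude-decay term $4N_{\rm s}/d_m^2$ is smaller than $\sum_n\omega_n^2$ by a factor of order $c^2/(B^2 d_m^2)$ and is negligible at the bandwidths and ranges of interest, so it is dropped. The remaining subcarrier sum is $\sum_{n=1}^{N_{\rm s}}\omega_n^2=\frac{16\pi^2\Delta f^2}{c^2}\sum_{n=1}^{N_{\rm s}}(n-N_{\rm s}/2)^2$, and a short computation gives $\sum_{n=1}^{N_{\rm s}}(n-N_{\rm s}/2)^2=\frac{N_{\rm s}^3+2N_{\rm s}}{12}\approx\frac{N_{\rm s}^3}{12}$ for $N_{\rm s}\gg1$, so that $\sum_n\omega_n^2\approx\frac{4\pi^2\Delta f^2N_{\rm s}^3}{3c^2}=\frac{4\pi^2 B^2 N_{\rm s}}{3c^2}$ after substituting $B=N_{\rm s}\Delta f$. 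Collecting terms yields ${\rm F}(d_m)\approx\frac{8\pi^2 B^2 N_{\rm s}}{3c^2}{\rm SNR}_m$, and \eqref{crlb} then gives \eqref{variance}.

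The algebra is routine; the real content is bookkeeping and being careful with the approximations. The step I expect to be the main obstacle is justifying the two simplifications — $\lambda_n\approx\lambda$ and discarding $4N_{\rm s}/d_m^2$ — cleanly enough that the displayed ``$\ge$'' is warranted: both discarded quantities are positive and tiny relative to the retained $\sum_n\omega_n^2$, so the retained term is the exact Fisher information up to negligible corrections and the approximate CRLB coincides with the stated bound, in the usual sense of \cite{KAY}. The only other place a slip is easy is the evaluation of $\sum_{n=1}^{N_{\rm s}}(n-N_{\rm s}/2)^2$ and its large-$N_{\rm s}$ limit.
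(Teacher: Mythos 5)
Your proposal is correct and follows essentially the same route as the paper's Appendix A: differentiate $\Gamma_n(d_m)$, form $\frac{2}{\sigma^2}\sum_n\lvert\partial\Gamma_n/\partial d_m\rvert^2$, use $\lambda_n\approx\lambda$ and $\sum_{n=1}^{N_{\rm s}}(n-N_{\rm s}/2)^2=(N_{\rm s}^3+2N_{\rm s})/12$, and drop the amplitude-decay and $O(1/N_{\rm s})$ terms for $N_{\rm s}\gg1$ before substituting ${\rm SNR}_m$ and $B=N_{\rm s}\Delta f$. The only cosmetic difference is that you factor out ${\rm SNR}_m$ earlier, whereas the paper carries $\lambda^2\varepsilon_m/(\sigma^2 d_m^4)$ through and substitutes at the end; the approximations and the resulting bound are identical.
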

    \begin{proof}
    	Please refer to Appendix \ref{A}.
    \end{proof}
    \begin{remark}	
    The lower bound $ \big( \frac{8\pi^2B^2N_{\rm s}}{3c^2} {\rm SNR}_{m} \big)^{-1}$ of ${\rm var}\{\hat{d}_m\}$ in \eqref{variance} is defined as \textbf{distance uncertainty}, which increases with distance $d_m$ and noise variance $\sigma^2$ and decreases with bandwidth $B$, number of sub-carriers $N_{\rm s}$, wave length $\lambda$, and RCS $\varepsilon_m$.      
    The advanced parameter extraction technique can achieve CRLB. Therefore, the extracted distance $\hat{d}_m$ can be viewed as a Gaussian variable with mean $d_m$ and variance $\sigma_m^2 = \big( \frac{8\pi^2B^2N_{\rm s}}{3c^2} {\rm SNR}_{m} \big)^{-1}$.
    \end{remark}

    \subsection{VRP Uncertainty}
    VRP is calculated through the combination of points on the reflective surface according to \eqref{vertical} and \eqref{collineation}.   
    According to Theorem \ref{T1}, 
    the distance from $\mathbf{p}_{{\rm rsp},1}$ and $\mathbf{p}_{{\rm rsp},2}$ to RP follows a Gaussian distribution with mean $d_1$ and $d_2$ and variance $\sigma_1^2$ and $\sigma_2^2$, respectively.
    Therefore, for $\mathbf{p}_{{\rm rsp},1} = [x_{1}, y_{1}]$ and $\mathbf{p}_{{\rm rsp},2} = [x_{2}, y_{2}]$, we have  
    \begin{align}\label{wp1-pdf}
    &x_{1} \!\sim\! \mathcal{N}(d_1\cos\phi_1,\sigma_1^2\cos^2\phi_1), \ y_{1} \!\sim\! \mathcal{N}(d_1\sin\phi_1,\sigma_1^2\sin^2\phi_1),\nonumber \\
   & x_{2} \!\sim\! \mathcal{N}(d_2\cos\phi_2,\sigma_2^2\cos^2\phi_2), \ y_{2} \!\sim\! \mathcal{N}(d_2\sin\phi_2,\sigma_2^2\sin^2\phi_2),
    \end{align}
    where $\mathcal{N}(\mu,\sigma^2)$ denotes the Gaussian distribution with mean $\mu$ and variance $\sigma^2$.
    For representation simplicity, we denote     
    \begin{align}\label{wp1-mv}
    &\mu_{x1}=d_1\cos\phi_1, \  \ \sigma^2_{x1}=\sigma_1^2\cos^2\phi_1,  \nonumber \\ &\mu_{y1}=d_1\sin\phi_1, \ \ \sigma^2_{y1}=\sigma_1^2\sin^2\phi_1, \nonumber \\
    &\mu_{x2}=d_2\cos\phi_2, \ \ \sigma^2_{x2}=\sigma_2^2\cos^2\phi_2, \nonumber \\ &\mu_{y2}=d_2\sin\phi_2, \ \ \sigma^2_{y2}=\sigma_2^2\sin^2\phi_2.   
    \end{align}
    After some simplification on \eqref{vertical} and \eqref{collineation}, we have \eqref{x} and \eqref{y} at top of next page.
    \begin{figure*}
    \begin{equation}\label{x}
    x_{\rm vrp} = \dfrac{x_{\rm rp}(x_{2}-x_{1})^2-(x_{\rm rp}-2x_{1})(y_{2}-y_{1})^2+2(y_{\rm rp}-y_{1})(x_{2}-x_{1})(y_{2}-y_{1})}{(x_{2}-x_{1})^2+(y_{2}-y_{1})^2},
    \end{equation}
    \end{figure*}
    \begin{figure*}
    \begin{equation}\label{y}
    y_{\rm vrp} = \dfrac{y_{\rm rp}(y_2-y_1)^2-(y_{\rm rp}-2y_1)(x_2-x_1)^2+2(x_{\rm rp}-x_1)(x_2-x_1)(y_2-y_1)}{(x_2-x_1)^2+(y_2-y_1)^2}.
    \end{equation}
    \hrulefill
    \end{figure*}
    Notably, $x_{\rm rp}$ and $y_{\rm rp}$ are constants, and $x_1$, $y_1$, $x_2$, and $y_2$ follow Gaussian distributions.
    Directly obtaining the distributions of $x_{\rm vrp}$ and $y_{\rm vrp}$ from \eqref{x} and \eqref{y} is difficult due to the non-linear calculations \cite{ratio}.
    We linearize \eqref{x} and \eqref{y} via first-order Taylor approximation.  
    The approximate distributions of $x_{\rm vrp}$ and $y_{\rm vrp}$ are given in the following theorem.
    
    \begin{Theorem}\label{T2}                
        Let $\mathbf{w}=[x_1-\mu_{x1},x_2-\mu_{x2},y_1-\mu_{y1},y_2-\mu_{y2}]$,         then, the linear approximation of $\mathbf{p}_{\rm vrp}$ with $\mathbf{w}$ is given by 
        \begin{equation}\label{pappro}
        \mathbf{p}_{\rm vrp} \approx \mathbf{p}_0 + \mathbf{w}\mathbf{Q},
        \end{equation}
        where
        $\mathbf{p}_0 = \big[\frac{a_0}{b_0}, \frac{c_0}{b_0}\big]$ and
        $\mathbf{Q}=[\mathbf{q}_x,\mathbf{q}_y]$ with
	    \begin{equation}\label{QX}
	    \mathbf{q}_x
	    \!\!=\!\!\bigg[\!\dfrac{a_1\!b_0\!-\!a_0b_1}{b_0^2}\!,\dfrac{a_2b_0\!-\!a_0b_2}{b_0^2}\!,\dfrac{a_3b_0\!-\!a_0b_3}{b_0^2}\!,\dfrac{a_4b_0\!-\!a_0b_4}{b_0^2}\!\bigg]^{\rm\!\! T}\!\!\!, \!\!\end{equation}
	    and
	    \begin{equation}\label{QY}
	    \mathbf{q}_y\!\!=\!\!\bigg[\!\dfrac{c_1b_0\!-\!c_0b_1}{b_0^2}\!, \dfrac{c_2b_0\!-\!c_0b_2}{b_0^2}\!, \dfrac{c_3b_0\!-\!c_0b_3}{b_0^2}\!,\dfrac{c_4b_0\!-\!c_0b_4}{b_0^2}\!\bigg]^{\rm\!\! T}\!\!\!,
	    \end{equation}
        where $(\cdot)^{\rm T}$ represents the transpose and the constants $a_0,\ldots, a_4$, $b_0,\ldots, b_4$, and $c_0,\ldots, c_4$ are given in Appendix \ref{B}.
    	\end{Theorem}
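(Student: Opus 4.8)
The plan is to read \eqref{x} and \eqref{y} as ratios of polynomials in the four perturbation variables stored in $\mathbf{w}$ and to expand each ratio to first order about $\mathbf{w}=\mathbf{0}$, i.e.\ about the mean point $(\mu_{x1},\mu_{x2},\mu_{y1},\mu_{y2})$. Write $D=(x_2-x_1)^2+(y_2-y_1)^2$ for the common denominator shared by \eqref{x} and \eqref{y}, and let $N_x$, $N_y$ denote their numerators.

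First I would substitute $x_1=\mu_{x1}+w_1$, $x_2=\mu_{x2}+w_2$, $y_1=\mu_{y1}+w_3$, $y_2=\mu_{y2}+w_4$ into $D$, $N_x$, and $N_y$. Since all three are quadratic in the original coordinates, each becomes a quadratic polynomial in $(w_1,\dots,w_4)$, which I would organise by total degree as $D=b_0+\sum_{i=1}^4 b_i w_i+r_D(\mathbf{w})$, $N_x=a_0+\sum_{i=1}^4 a_i w_i+r_x(\mathbf{w})$, and $N_y=c_0+\sum_{i=1}^4 c_i w_i+r_y(\mathbf{w})$, where the remainders $r_D,r_x,r_y$ are homogeneous of degree two. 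This display \emph{defines} the constants: $b_0,c_0,a_0$ are $D,N_y,N_x$ evaluated at the mean (with $x_{\rm rp},y_{\rm rp}$ carried along as fixed constants), and $a_i,b_i,c_i$ are the partial derivatives $\partial_{w_i}N_x,\partial_{w_i}D,\partial_{w_i}N_y$ at $\mathbf{w}=\mathbf{0}$; their closed forms are the ones collected in Appendix \ref{B}.

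Next I would apply the elementary first-order expansion of a quotient: for a ratio $N/D$ with $N=a_0+\sum_i a_i w_i+O(\|\mathbf{w}\|^2)$ and $D=b_0+\sum_i b_i w_i+O(\|\mathbf{w}\|^2)$ one has $N/D=a_0/b_0+\sum_i \frac{a_i b_0-a_0 b_i}{b_0^2}\,w_i+O(\|\mathbf{w}\|^2)$, since $\partial_{w_i}(N/D)|_{\mathbf{0}}=(a_i b_0-a_0 b_i)/b_0^2$. Applying this to $x_{\rm vrp}=N_x/D$ and $y_{\rm vrp}=N_y/D$ gives two scalar identities; stacking them into a row vector yields $\mathbf{p}_{\rm vrp}\approx \mathbf{p}_0+\mathbf{w}\mathbf{Q}$ with $\mathbf{p}_0=[a_0/b_0,\,c_0/b_0]$ and the columns $\mathbf{q}_x,\mathbf{q}_y$ of $\mathbf{Q}$ exactly as in \eqref{QX} and \eqref{QY}. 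Dropping the $O(\|\mathbf{w}\|^2)$ remainder is the content of the first-order Taylor approximation; it is accurate in the operating regime because, by Theorem \ref{T1}, $\sigma_1^2$ and $\sigma_2^2$ — and hence $\|\mathbf{w}\|$ — are small unless the SNR is very low.

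The only genuinely laborious part, and the step where an error is most likely to creep in, is the coefficient bookkeeping in the expansion of $D$, $N_x$, $N_y$: correctly collecting the fifteen constants $a_0,\dots,a_4,b_0,\dots,b_4,c_0,\dots,c_4$ while treating $x_{\rm rp}$ and $y_{\rm rp}$ as fixed. I would carry out this expansion carefully and sanity-check it by confirming that $\mathbf{w}=\mathbf{0}$ recovers \eqref{x} and \eqref{y} exactly; the quotient linearization and the assembly into the matrix form $\mathbf{p}_0+\mathbf{w}\mathbf{Q}$ are then routine, which is why the explicit list of these coefficients is deferred to Appendix \ref{B}.
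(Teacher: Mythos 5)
Your proposal is correct and follows essentially the same route as the paper's Appendix \ref{B}: a first-order Taylor expansion of $x_{\rm vrp}$ and $y_{\rm vrp}$ about the mean point, with the coefficients $\frac{a_i b_0 - a_0 b_i}{b_0^2}$ arising from the quotient rule applied to the numerator and denominator of \eqref{x} and \eqref{y}. Your organization (expanding $N_x$, $N_y$, $D$ separately in $\mathbf{w}$ before linearizing the ratio) is just a bookkeeping variant of the paper's direct differentiation, and your identification of $a_0,b_0,c_0$ as the mean-point values and $a_i,b_i,c_i$ as the partials matches the constants listed in the appendix.
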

     \begin{proof}
    	Please refer to Appendix \ref{B}.
    \end{proof}
    \begin{remark}
    	Given that the elements of $\mathbf{w}$ follow Gaussian distributions and $\mathbf{p}_{\rm vrp}$ has an approximately linear relationship with $\mathbf{w}$, we assume that $\mathbf{p}_{\rm vrp}$ follows a Gaussian distribution with mean $\mathbf{p}_0$ and covariance matrix $\mathbf{Q}^{\rm T}{\rm cov}\{\mathbf{w}\}\mathbf{Q}$, where ${\rm cov}\{\mathbf{w}\} = {\rm diag}\{\sigma^2_{x1},\sigma^2_{x2},\sigma^2_{y1},\sigma^2_{y2}\}$.
    \end{remark}

    When $M$ RSPs are considered, we obtain $M-1$ solutions of VRP. 
    According to Theorem \ref{T2}, by combining RSP $m$ and RSP $1$, the $m$-th VRP solution is obtained as
    \begin{equation}\label{pm}
    	\mathbf{p}_{\rm vrp}^{1m} \approx \mathbf{p}_0^{1m} + \mathbf{w}^{1m}\mathbf{Q}^{1m},
    \end{equation}	
    where \eqref{pm} is obtained by replacing RSP $2$ in \eqref{pappro} with RSP $m$.
    The distribution of $\mathbf{p}_{\rm vrp}$ is given in the following corollary.	
    	\begin{Corollary}\label{L2}	
    	By combining the $M-1$ solutions, the mean of VRP is obtained as
    	\begin{equation}\label{M-mean} 
    	\frac{1}{M-1} {\sum_{m=2}^{M}\mathbf{p}_0^{1m}},
    	\end{equation}	
    	and the covariance matrix is given as
    	\begin{equation}\label{M-cov}   \frac{1}{(M-1)^2}{\sum_{m=2}^{M} (\mathbf{Q}^{1m})^{\rm T} {\rm cov}\{\mathbf{w}^{1m}\}}\mathbf{Q}^{1m}.
    	\end{equation} 	
    \end{Corollary}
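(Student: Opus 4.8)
The plan is to regard the combined VRP estimate as the arithmetic mean of the $M-1$ pairwise solutions, $\hat{\mathbf{p}}_{\rm vrp}=\frac{1}{M-1}\sum_{m=2}^{M}\mathbf{p}_{\rm vrp}^{1m}$, and to propagate the mean and covariance of each summand through this linear operation. By Theorem~\ref{T2} and the remark following it, the $m$-th solution in \eqref{pm} is, to first order, Gaussian, $\mathbf{p}_{\rm vrp}^{1m}\sim\mathcal{N}\big(\mathbf{p}_0^{1m},(\mathbf{Q}^{1m})^{\rm T}{\rm cov}\{\mathbf{w}^{1m}\}\mathbf{Q}^{1m}\big)$, since $\mathbf{w}^{1m}$ is zero-mean Gaussian and \eqref{pm} is affine in $\mathbf{w}^{1m}$; the entries of ${\rm cov}\{\mathbf{w}^{1m}\}$ are the analogues of the variances in \eqref{wp1-mv} for the pair (RSP~$1$, RSP~$m$), i.e.\ ${\rm diag}\{\sigma_{x1}^2,\sigma_{xm}^2,\sigma_{y1}^2,\sigma_{ym}^2\}$.

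First I would obtain the mean. Taking expectations in \eqref{pm} and using ${\rm E}\{\mathbf{w}^{1m}\}=\mathbf{0}$ gives ${\rm E}\{\mathbf{p}_{\rm vrp}^{1m}\}=\mathbf{p}_0^{1m}$; linearity of expectation then yields ${\rm E}\{\hat{\mathbf{p}}_{\rm vrp}\}=\frac{1}{M-1}\sum_{m=2}^{M}\mathbf{p}_0^{1m}$, which is \eqref{M-mean}. For the covariance I would invoke the standard identity that the covariance of a sum of \emph{independent} random vectors is the sum of their covariances: treating the $M-1$ solutions as independent, ${\rm cov}\big\{\sum_{m=2}^{M}\mathbf{p}_{\rm vrp}^{1m}\big\}=\sum_{m=2}^{M}(\mathbf{Q}^{1m})^{\rm T}{\rm cov}\{\mathbf{w}^{1m}\}\mathbf{Q}^{1m}$, and dividing by $(M-1)^2$ gives \eqref{M-cov}. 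Because a linear combination of jointly Gaussian vectors is again Gaussian, $\hat{\mathbf{p}}_{\rm vrp}$ is then Gaussian with the mean and covariance stated in the corollary.

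The step that needs the most care — and the only nontrivial one — is the independence assumption used in the covariance computation. All $M-1$ pairwise solutions share the common reflection point $\mathbf{p}_{{\rm rsp},1}$, so the perturbations $x_1-\mu_{x1}$ and $y_1-\mu_{y1}$ appear inside every $\mathbf{w}^{1m}$; strictly, the solutions are correlated and the exact covariance of the average would contain cross terms of the form $(\mathbf{Q}^{1m})^{\rm T}{\rm cov}\{\mathbf{w}^{1m},\mathbf{w}^{1m'}\}\mathbf{Q}^{1m'}$ for $m\neq m'$. The corollary implicitly neglects these cross terms, i.e.\ it models each pairwise estimate as a fresh, independent measurement of the VRP; in the write-up I would flag this approximation explicitly. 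A fully rigorous variant would either retain the cross-covariances (a block computation over all pairs sharing RSP~$1$) or re-pair the $M$ RSPs into $\lfloor M/2\rfloor$ disjoint pairs so that the summands are genuinely independent and \eqref{M-mean}--\eqref{M-cov} hold exactly; one could likewise replace the equal-weight average by an inverse-covariance-weighted fusion if a minimum-variance combiner were desired.
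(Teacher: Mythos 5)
Your proposal is correct and follows exactly the reasoning the paper intends: the paper offers no explicit proof of this corollary, which is meant to follow directly from Theorem~\ref{T2} by taking the arithmetic mean of the $M-1$ Gaussian pairwise solutions $\mathbf{p}_{\rm vrp}^{1m}$ and summing their covariances as if the solutions were independent. Your additional observation --- that every $\mathbf{w}^{1m}$ contains the same perturbations $x_1-\mu_{x1}$ and $y_1-\mu_{y1}$ of the shared point $\mathbf{p}_{{\rm rsp},1}$, so the summands are correlated and \eqref{M-cov} silently discards cross-covariance terms of the form $(\mathbf{Q}^{1m})^{\rm T}{\rm cov}\{\mathbf{w}^{1m},\mathbf{w}^{1m'}\}\mathbf{Q}^{1m'}$ --- is a genuine caveat the paper does not acknowledge, and either flagging it as an approximation or re-pairing the RSPs into disjoint pairs would be needed to make the stated covariance exact.
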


    \begin{remark}	 
    The covariance matrix given in \eqref{M-cov} is 
    defined as \textbf{ VRP uncertainty}. 
    In accordance with \eqref{M-cov}, the more RSPs we have, the smaller the VRP uncertainty. 
    The RCS characterizes the reflection condition at different points on the reflecting surface.
    We can infer from \eqref{M-cov} that one or two points with poor reflection conditions will not affect the overall result as long as we have sufficient number of RSPs. 
    The estimated VRP can be assumed to follow the Gaussian distribution, with mean $\frac{1}{M-1} {\sum_{m=2}^{M}\mathbf{p}_0^{1m}}$ and covariance matrix $\frac{1}{(M-1)^2}{\sum_{m=2}^{M} (\mathbf{Q}^{1m})^{\rm T}} {\rm cov}\{\mathbf{w}^{1m}\}\mathbf{Q}^{1m}$.
    	
    \end{remark}

\section{Hybrid Active and Passive Sensing for SLAM}\label{A&P}	

\begin{table*}
	\centering%\vspace{-0.4cm}
	\caption{Notations of important variables}\label{NOTATIONS}
	\renewcommand{\arraystretch}{1.5}
	\fontsize{7.6}{7.6}\selectfont
	\begin{threeparttable}	
		\begin{tabular}{c l c l}			
			\toprule		
			Notation & Definition & Notation & Definition \\
			\hline
			$\mathbf{p}^{(k)}_{t,l}$ & $l=1$: location of the $k$-th PA at time $t$                  & $\mathbf{z}_{t,l}^{(k)}$ & measurement vector corresponding to the $l$-th VA \\  
			& $l>1$: location of the $l$-th VRP at time $t$ &                          & of the $k$-th PA at time $t$                      \\ \hline
			$\mathbf{u}_{t}$         & state of the mobile agent at time $t$                         & $\mathbf{a}_t^{(k)}$     & feature-oriented data association vector          \\ \hline
			$L_{t}^{(k)}$            & number of features corresponding to the $k$-th PA             & $\mathbf{b}_t^{(k)}$     & measurement-oriented data association vector      \\ \hline
			 $r^{(k)}_{t,l}$          & binary variable indicates the existence of feature & ${\mathbf{v}}^{(k)}_{t,l}$ & state of the $l$-th feature at time $t$                                   \\ \hline
			$\mathcal{F}_{t}^{(k)}$ & set of measurement indexes of false alarms  & $\mathbf{c}_{t}$           & number-of-measurements vector at time $t$                                                 \\ \hline
			$\mathcal{K}_{t-\mathbb{J}}^{(k-1+\mathbb{I})}$     & set of legacy features from the previous time & $\mathcal{M}_{t}^{(k)}$ & set of obtained MPC measurement indexes of the $k$-th PA                                                            \\ \hline
			$\mathcal{D}_{t}^{(k)}$ & set of legacy feature indexes which generate measurement & $\bar{\mathcal{D}}_{t}^{(k)}$ & $\bar{\mathcal{D}}_{t}^{(k)}= \mathcal{K}_{t-\mathbb{J}}^{(k-1+\mathbb{I})} \backslash {\mathcal{D}}_{t}^{(k)}$                     \\ \hline
			$\mathcal{N}_{t}^{(k)}$ & set of measurement indexes which originate from new features & $\bar{\mathcal{N}}_{t}^{(k)}$ & $\bar{\mathcal{N}}_{t}^{(k)}= \mathcal{M}_{t}^{(k)} \backslash {\mathcal{N}}_{t}^{(k)}$                       \\ \hline
			$\tilde{\ast}$ & legacy $\ast$, $\ast = \{   \mathbf{p}^{(k)}_{t,l},  r^{(k)}_{t,l}, {\mathbf{v}}^{(k)}_{t,l}\}$  & $ \breve{\star}$ & new $\star$, $\star = \{   \mathbf{p}^{(k)}_{t,l},  r^{(k)}_{t,l}, {\mathbf{v}}^{(k)}_{t,l}\}$  \\				
			\bottomrule			
		\end{tabular}
	\end{threeparttable}
\end{table*}

In this section, we propose a hybrid active and passive sensing-based BP SLAM mechanism, as shown in Fig. \ref{fig:as}.
Different from the classic passive sensing-based BP SLAM \cite{MVA}, the proposed mechanism has two more modules: (i) PA initialization and (ii) VRP and PA refinement. 
Therefore, compared with mechanisms based only on passive or active sensing, the proposed mechanism has two advantages:
(i) the proposed mechanism does not require any prior information about PAs because the active sensing result is used in the PA initialization, and
(ii) the active sensing result is gradually refined by passive sensing.
The important variables are summarized in Table \ref{NOTATIONS}.

\begin{figure*}
	\centering%\vspace{-0.3cm}
	\includegraphics[scale=0.6]{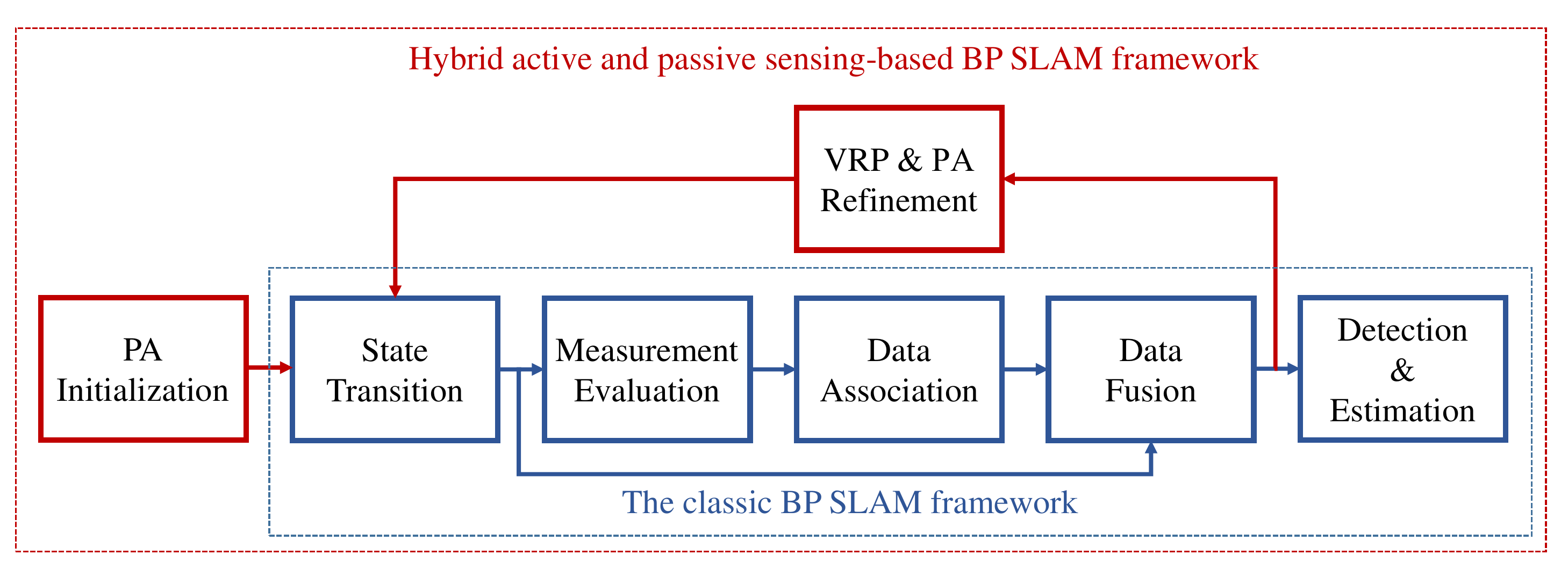}
	\caption{Framework of the proposed hybrid active and passive sensing-based BP SLAM.} \label{fig:as}%\vspace{-0.3cm}
\end{figure*}

\subsection{Theoretical Foundation of VRP-based BP SLAM}
In this section, we introduce the theoretical foundation of VRP-based BP SLAM, as shown in the blue modules in Fig. \ref{fig:as}.
Let PAs and VRPs denote the features of the radio environment,
where $\mathbf{p}^{(k)}_{{\rm pa},t} = \mathbf{p}^{(k)}_{t,1} = [x^{(k)}_{t,1},y^{(k)}_{t,1}]$ and 
$\mathbf{p}^{(k,l)}_{{\rm vrp},t} = \mathbf{p}^{(k)}_{t,l} = [x^{(k)}_{t,l},y^{(k)}_{t,l}]$ for $l\geqslant 2$.
Therefore, the location of the $l$-th feature is denoted by $\mathbf{p}^{(k)}_{t,l}$.
Given the data association uncertainty, 
a measurement can originate from a \emph{legacy feature} or a \emph{new feature} or may not originate from any feature (i.e., a \emph{false alarm}).
Different from VA-based BP SLAM \cite{slam2,BP2,yj5}, in the VRP-based BP SLAM,
the measurements and features are associated sequentially across PAs.
Therefore,
a legacy feature for the $k$-th PA means that the feature already exists for the  $(k-1)$-th PA at time $t$ when $k\neq 1$ or for the $K$-th PA at time $t-1$ when $k=1$.
Let $k-1+\mathbb{I}$ denote the index of the legacy feature at time slot $t-\mathbb{J}$,
where 
\begin{equation}
(\mathbb{I},\mathbb{J})  = \left\{ 
\begin{array}{ll}
& \!\!\!\!\!\!\!(K,1), \ \ \text{if} \  k=1,  \\ 
&\!\!\!\!\!\!\! (0,\ 0), \ \  \text{otherwise}. 
\end{array}
\right.
\end{equation}
Let $\tilde{\mathbf{p}}^{(k)}_{t,l}$ denote the location of a legacy feature that exists at the current time slot and $\mathcal{D}_{t}^{(k)}$ denote the set of legacy feature indexes that generate a measurement of the current PA at the current time.
By contrast, a new feature means the feature does not exist for the $(k-1+\mathbb{I})$-th PA at time $t-\mathbb{J}$.
Let $\breve{\mathbf{p}}^{(k)}_{t,l}$ denote the location of a new feature and $\mathcal{N}_{t}^{(k)}$ denote a set of measurement indexes originating from new features.
Moreover, let $\mathcal{F}_{t}^{(k)}$ denote a set of measurement indexes of false alarms. Therefore, we classify the measurement indexes in $\mathcal{M}_{t}^{(k)}$ into three subsets according to their origins and obtain
$
|\mathcal{M}_{t}^{(k)}| = |\mathcal{D}_{t}^{(k)}|+|\mathcal{N}_{t}^{(k)}|+|\mathcal{F}_{t}^{(k)}|
$.

For data association, we define the following data association vectors according to \cite{BP1}.
First, let $\mathcal{K}_{t-\mathbb{J}}^{(k-1+\mathbb{I})}$ represent a set of legacy feature indexes.
The $|\mathcal{K}_{t-\mathbb{J}}^{(k-1+\mathbb{I})}|$-dimensional feature-oriented vector is  $\mathbf{a}_t^{(k)}=\big[{a}_{t,1}^{(k)},\ldots,{a}_{t,|\mathcal{K}_{t-\mathbb{J}}^{(k-1+\mathbb{I})}|}^{(k)}\big]$, the element of which is given by 
\begin{equation}\label{DA1}
{a}_{t,i}^{(k)}\!=\!\left\{
\begin{array}{ll}
\!\!\!j \in \{ 1,\ldots, |\mathcal{M}_{t}^{(k)}|\}, &\!\!\text{legacy feature $i$ generates} \\
&\!\!\text{measurement $j$ at time $t$,} \\
\!\!\!0, &\!\!\text{legacy feature $i$ does not}\\
&\!\!\text{generate any measurement},
\end{array} \right.
\end{equation}where $i = 1,\ldots, |\mathcal{K}_{t-\mathbb{J}}^{(k-1+\mathbb{I})}|$.
We also define the stacked vector $\mathbf{a}_t=[\mathbf{a}_{t}^{(1)},\ldots,\mathbf{a}_{t}^{(K)}]$.
Second, the $|\mathcal{M}_{t}^{(k)}|$-dimensional measurement-oriented vector is $\mathbf{b}_t^{(k)}=\big[{b}_{t,1}^{(k)},\ldots,{b}_{t,|\mathcal{M}_{t}^{(k)}|}^{(k)}\big]$, and we obtain 
\begin{equation}\label{DA2}
{b}_{t,j}^{(k)}\!=\!\left\{
\begin{array}{ll}
\!\!\! i \!\in \!\{ 1,\ldots, |\mathcal{K}_{t-\mathbb{J}}^{\!(k-1\!+\mathbb{I})\!}| \}, & \!\!\text{measurement $j$ is generated} \\ 
& \!\!\text{by legacy feature $i$ at time $t$,} \\ 
\!\!\! 0, &\!\!\text{measurement $j$ is not} \\
&\!\!\text{generated by legacy feature,} 
\end{array} \right.
\end{equation}
where $j = 1,\ldots, |\mathcal{M}_{t}^{(k)}|$.
We also define the stacked vector $\mathbf{b}_t=[\mathbf{b}_{t}^{(1)},\ldots,\mathbf{b}_{t}^{(K)}]$.
Vectors $\mathbf{a}_t$ and $\mathbf{b}_t$, which are equivalent because one can be determined from the other,
can ensure the scalability properties of the BP algorithm. 
A constraint exists such that each measurement originates from a maximum of one feature or one false alarm, and one feature can generate at most one measurement each time.
The exclusion-enforcing
function used to ensure the constraint is defined as
\begin{equation}\label{indicator}
\Psi(\mathbf{a}_{t}^{(k)},\mathbf{b}_{t}^{(k)})= \prod\limits_{i=1}^{|\mathcal{M}_{t}^{(k)}|}\prod\limits_{j=1}^{|\mathcal{K}_{t-\mathbb{J}}^{(k-1+\mathbb{I})}|}\Psi({a}_{t,i}^{(k)},{b}_{t,j}^{(k)}),
\end{equation}
where 
\begin{equation}
\Psi({a}_{t,i}^{(k)},{b}_{t,j}^{(k)}) = \left\{
\begin{array}{ll}
0,  &  {a}_{t,i}^{(k)}=j, {b}_{t,j}^{(k)} \neq i\  \text{or} \  {b}_{t,j}^{(k)}=i, {a}_{t,i}^{(k)} \neq j,\\
1, &  \text{otherwise}.
\end{array} \right.
\end{equation}

Then, the joint posterior probability density function (PDF) of the state of the agent and features and the data association vectors conditioned on measurements for all time slots up to $T$ is defined as
\begin{multline}\label{jointP}
f(\mathbf{u}_{1:T},{\mathbf{v}}_{1:T},\mathbf{a}_{1:T},\mathbf{b}_{1:T}|\mathbf{z}_{1:T}) \\
= \prod\limits_{t=1}^{T} \prod\limits_{k=1}^{K} f(\mathbf{u}_{t},{\mathbf{v}}_{t}^{(k)},\mathbf{a}_{t}^{(k)},\mathbf{b}_{t}^{(k)}|\mathbf{z}_{t}^{(k)}),
\end{multline}
which can be computed based on Bayes' theorem follows:
\begin{multline}\label{jointP1}
f(\mathbf{u}_{1:T},{\mathbf{v}}_{1:T},\mathbf{a}_{1:T},\mathbf{b}_{1:T}|\mathbf{z}_{1:T})\\
\!\propto \! \prod\limits_{t=1}^{T}\!\prod\limits_{k=1}^{K}\!\underbrace {f(\mathbf{u}_{t},\tilde{{\mathbf{v}}}_{t}^{(k)}|\mathbf{u}_{t-1},{\mathbf{v}}_{t-\mathbb{J}}^{(k-1+\mathbb{I})})}_{(a)}\\
\times
\underbrace {f(\mathbf{z}_{t}^{(k)}|\mathbf{u}_{t},{\mathbf{v}}_{t}^{(k)},\mathbf{a}_{t}^{(k)},\mathbf{b}_{t}^{(k)})}_{(b)}
\\
\times\underbrace {f(\mathbf{a}_{t}^{(k)},\mathbf{b}_{t}^{(k)},\mathbf{c}^{(k)}_{t},\breve{{\mathbf{v}}}_{t}^{(k)}|\tilde{{\mathbf{v}}}_{t}^{(k)},\mathbf{u}_{t})}_{(c)},
\end{multline}
where $\tilde{{\mathbf{v}}}_{t}^{(k)}$ and $\breve{{\mathbf{v}}}_{t}^{(k)}$ denote the state of legacy and new features, respectively, and 
${\mathbf{v}}_{t}^{(k)}=[\tilde{{\mathbf{v}}}_{t}^{(k)},\breve{{\mathbf{v}}}_{t}^{(k)}]$.
The elements in $\tilde{{\mathbf{v}}}_{t}^{(k)}$ and $\breve{{\mathbf{v}}}_{t}^{(k)}$ are denoted by $\tilde{\mathbf{v}}^{(k)}_{t,l} = [\tilde{\mathbf{p}}^{(k)}_{t,l},\tilde{r}^{(k)}_{t,l}]$ for $l=1,\ldots, |\mathcal{D}_{t}^{(k)}|$ and $\breve{\mathbf{v}}^{(k)}_{t,l} = [\breve{\mathbf{p}}^{(k)}_{t,l},\breve{r}^{(k)}_{t,l}]$ for $l=1,\ldots,|\mathcal{N}_{t}^{(k)}|$, respectively.
Binary variables $\tilde{r}^{(k)}_{t,l}\in\{0,1\}$ and
$\breve{r}^{(k)}_{t,l}\in\{0,1\}$ indicate the existence of the $(k,l)$-th feature at time $t$, that is, the feature exists at time $t$ if and only if $\tilde{r}^{(k)}_{t,l}=1$ or $\breve{r}^{(k)}_{t,l}=1$. 
The number-of-measurements vector at time $t$ is $\mathbf{c}_t=[{c}_{t}^{(1)},\ldots,{c}_{t}^{(K)}]$, where ${c}_{t}^{(k)}=|\mathcal{M}_{t}^{(k)}|$.
We have $\mathbf{a}_{t}$ implies $\mathbf{b}_{t}$ and $\mathbf{z}_t$ implies $\mathbf{c}_{t}$. 
Notably, (a), (b), and (c) of \eqref{jointP1} correspond to the state transition, measurement evaluation, and data association phases, respectively.
The data fusion phase corresponds to the entire process of \eqref{jointP1} (Fig. \ref{fig:as}).

Then, a minimum mean squared error (MMSE) estimator for the agent's state $\mathbf{u}_{T}$ at time slot $T$ is given as  
\begin{equation}\label{mmse1}
\hat{\mathbf{u}}_{T} = \int \mathbf{u}_{T} f(\mathbf{u}_{T}|\mathbf{z}_{1:T})  \text{d} \mathbf{u}_{T}, 
\end{equation}
where $f(\mathbf{u}_{T}|\mathbf{z}_{1:T})= \int_{\mathbf{x}}  f(\mathbf{u}_{1:T},{\mathbf{v}}_{1:T},\mathbf{a}_{1:T},\mathbf{b}_{1:T}|\mathbf{z}_{1:T}) \text{d}\mathbf{x} $  is a marginal posterior PDF, and $\mathbf{x} = [\mathbf{u}_{1:T\!-\!1},\!{\mathbf{v}}_{1:T},\!\mathbf{a}_{1:T},\!\mathbf{b}_{1:T}]$.
The posterior existence probability $p({r}_{T,l}^{(k)}=1|\mathbf{z}_{1:T})$ is given as
\begin{equation}\label{marginal}
p({r}_{T,l}^{(k)}\!=\!1|\mathbf{z}_{1:T}) =\int \!\!\!f(\mathbf{{p}}_{T,l}^{(k)},\!{r}_{T,l}^{(k)}\!=\!1|\mathbf{z}_{1:T}) \text{d} \mathbf{p}_{T,l}^{(k)} ,
\end{equation} 
where $f(\mathbf{{p}}_{T,l}^{(k)},\!{r}_{T,l}^{(k)}\!=\!1|\mathbf{z}_{1:T})$ is a marginal posterior PDF in \eqref{jointP1}.
On the basis of Bayes' theorem, we obtain 
\begin{equation}\label{marginal2}
f(\mathbf{p}_{T,l}^{(k)}|{r}_{T,l}^{(k)}\!=\!1,\mathbf{z}_{1:T})  \!=\! \frac{ f(\mathbf{p}_{T,l}^{(k)},{r}_{T,l}^{(k)}\!=\!1|\mathbf{z}_{1:T})}{p({r}_{T,l}^{(k)}=1|\mathbf{z}_{1:T})}. 
\end{equation}
The MMSE estimator for the feature $\mathbf{p}_{T,l}^{(k)}$ can be obtained as 
\begin{equation}\label{mmse6}
\hat{\mathbf{p}}_{T,l}^{(k)} \! =\!\! \int \!\! \mathbf{p}_{T,l}^{(k)} f(\mathbf{p}_{T,l}^{(k)}|{r}_{T,l}^{(k)}=1,\mathbf{z}_{1:T})  \text{d} \mathbf{p}_{T,l}^{(k)}. 
\end{equation}
The detection phase in Fig. \ref{fig:as} is \eqref{marginal}, and the estimation phase is \eqref{mmse1} and \eqref{mmse6}.

\begin{figure*}
	\centering
	\includegraphics[scale=0.205]{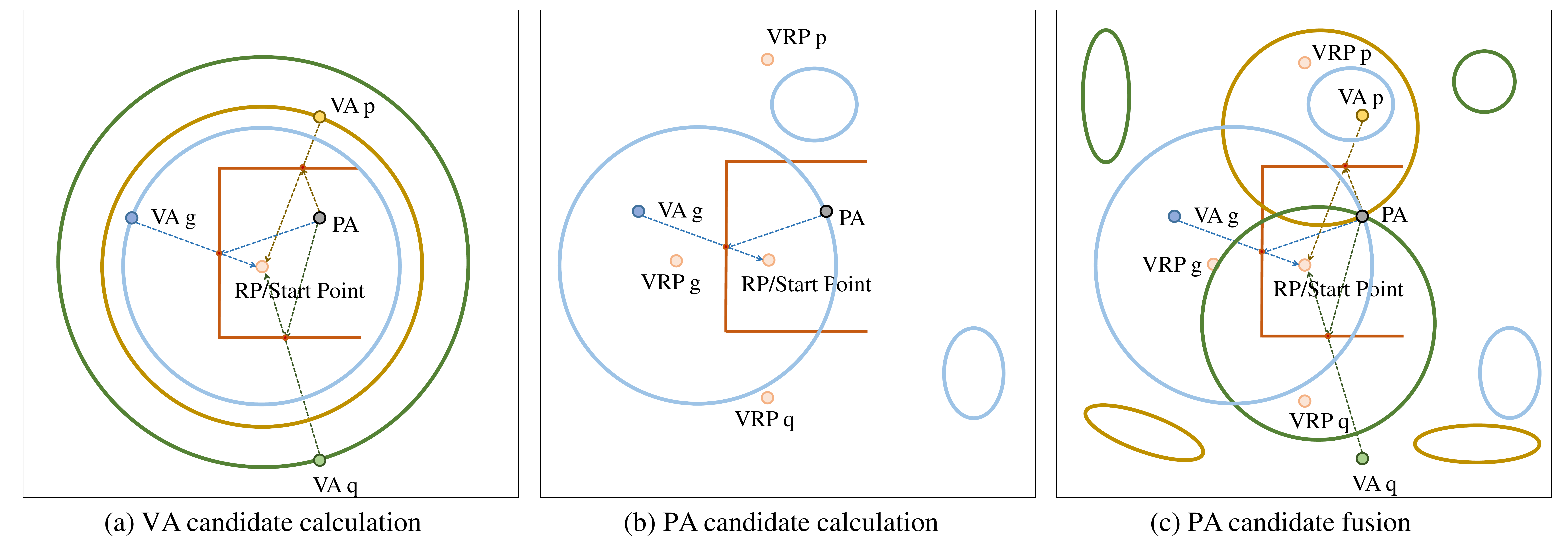}
	\caption{Illustration of PA initialization, where one PA, three VRPs, and three VAs are depicted as an example.} \label{fig:painitial}	
\end{figure*}
\subsection{Hybrid Active and Passive Sensing-based SLAM Mechanism}
In this section, we explain each phase (Fig. \ref{fig:as}) of the proposed hybrid active and passive sensing-based SLAM mechanism in detail.

\subsubsection{PA Initialization}\label{initial}

At the start point of the trajectory, 
the agent performs active sensing to obtain $N_{\rm vrp}$ VRPs. Then, the agent performs passive sensing to obtain $|\mathcal{M}_{1}^{(k)}|$ measurements corresponding to PA $k$ for $k=1,\cdots,K$.
PA initialization undergoes three steps:
(i) VA candidate calculation, (ii) PA candidate calculation, (iii) PA candidate fusion.

The process of PA initialization with one PA is shown in Fig. \ref{fig:painitial} as an example.
VA candidate calculation is shown in Fig. \ref{fig:painitial} (a). Here, we obtain one VA candidate by \eqref{TOA} for one measurement. 
The VA candidate $\mathbf{p}^{(k,l)}_{{\rm va},t}$ is distributed on the circle centered on RP with radius $c\tau_{t,l}^{(k)}$.
The measurements have three kinds of origins, that is, PA, VA, and false alarm.
The VA candidate can be a VA, PA, or false alarm, where PA can be excluded by removing the smallest TOA measurement.
According to \eqref{mva-va}, we can localize PA if we know VRP and VA.
PA candidate calculation is shown in Fig. \ref{fig:painitial} (b) for one VA.
We combine the VA candidate and $N_{\rm vrp}$ VRPs one by one in accordance with \eqref{mva-va} because the data association among VRPs and VAs is uncertain.
After transformation by \eqref{mva-va}, the circle of a VA candidate is transformed into a circle when combined with the right VRP, and the circle of a VA candidate is transformed into an irregular shape when combined with the wrong VRPs.
We use an ellipse to represent the irregular shape;
thus, the blue circle and ellipses in Fig. \ref{fig:painitial} (b) represent the distribution of PA candidates corresponding to VA g.
Meanwhile, the PA candidate fusion is shown in Fig. \ref{fig:painitial} (c), where three VRPs and three VAs are considered.
Given that one VA candidate generates three PA candidates, by associating the PA candidates generated by three or more VAs, we can obtain the true position of PA.
The process is repeated $K$ times in parallel to obtain the initialization of $K$ PAs.

\subsubsection{State Transition}

The agent and legacy feature states are assumed to independently evolve according to Markovian state dynamics given by 
\begin{multline}\label{mark}
f(\mathbf{u}_{t},\tilde{{\mathbf{v}}}_{t}^{(k)}|\mathbf{u}_{t-1},{\mathbf{v}}^{(k-1+\mathbb{I})}_{t-\mathbb{J}})\\
=f(\mathbf{u}_{t}|\mathbf{u}_{t-1})\prod\limits_{k=1}^{K}\prod\limits_{l=1}^{|\mathcal{K}_{t-\mathbb{J}}^{(k-1+\mathbb{I})}|}f(\tilde{{\mathbf{v}}}^{(k)}_{t,l}|{\mathbf{v}}^{(k-1+\mathbb{I})}_{t-\mathbb{J},l}).
\end{multline}
The state transition function of agent $f(\mathbf{u}_{t}|\mathbf{u}_{t-1})$
is defined by a linear, near-constant-velocity motion model \cite{sem} given as $\mathbf{u}_{t}^{\rm T} = \mathbf{A} \mathbf{u}_{t-1}^{\rm T}+ \bm{\omega}_{t}$, 
where
\begin{equation}       
\mathbf{A}=\left(                
\begin{array}{cccc}   
1 & 0 & \Delta T & 0\\ 
0 & 1 & 0 & \Delta T\\  
0 & 0 & 1 & 0\\  
0 & 0 & 0 & 1\\  
\end{array}
\right),
\end{equation}
$\Delta T$ is the sampling period,
and $\bm{\omega}_{t}$ is the driving process, that follows an independently identically Gaussian distribution across $t$ with zero mean.
For the state transition function of feature $f(\tilde{{\mathbf{v}}}^{(k)}_{t,l}|{\mathbf{v}}_{t-\mathbb{J}}^{(k-1+\mathbb{I})})$,
if a feature does not exist at the previous time, then it cannot exist as a legacy feature at the current time. Therefore,
for ${\tilde{{r}}}_{t-\mathbb{J},l}^{(k-1+\mathbb{I})}=0$, we obtain
\begin{equation}\label{tt1}
\begin{array}{ll}
&f(\tilde{{\mathbf{p}}}^{(k)}_{t,l},{\tilde{{r}}}^{(k)}_{t,l}|\mathbf{{p}}_{t-\mathbb{J},l}^{(k-1+\mathbb{I})},0)=\left\{
\begin{array}{lcc}f_{\rm D}(\tilde{{\mathbf{p}}}^{(k)}_{t,l}), & & {\tilde{{r}}^{(k)}_{t,l}=0},\\
0, & & {\tilde{{r}}^{(k)}_{t,l}=1},
\end{array} \right.
\end{array} 
\end{equation}
where $f_{\rm D}(\cdot)$ is an arbitrary ``dummy" PDF, which is explained in detail in \cite{BP1}.
If a feature exists at the previous time, then the probability that it still exists at the current time is determined by the survival probability.
Therefore, for ${\tilde{{r}}}_{t-\mathbb{J},l}^{(k-1+\mathbb{I})}=1$, we obtain
\begin{equation}\label{tt2}
\begin{array}{ll}
&f(\tilde{{\mathbf{p}}}^{(k)}_{t,l},{\tilde{{r}}}^{(k)}_{t,l}|\mathbf{{p}}_{t-\mathbb{J},l}^{(k-1+\mathbb{I})},1)\\
&
=\left\{
\begin{array}{lc}
\left(1-P_{\rm s}(\mathbf{{p}}_{t-\mathbb{J},l}^{(k-1+\mathbb{I})})\right)f_{\rm D}(\tilde{{\mathbf{p}}}^{(k)}_{t,l}),  &  {\tilde{{r}}^{(k)}_{t,l}=0},\\
{P_{\rm s}(\mathbf{{p}}_{t-\mathbb{J},l}^{(k-1+\mathbb{I})})} f(\tilde{{\mathbf{p}}}^{(k)}_{t,l}|\mathbf{{p}}_{t-\mathbb{J},l}^{(k-1+\mathbb{I})})  &  {\tilde{{r}}^{(k)}_{t,l}=1},
\end{array} \right.
\end{array} 
\end{equation}
where ${P_{\rm s}(\cdot)}\in (0,1]$ represents the survival probability of a feature.

\subsubsection{Measurement Evaluation}

A measurement can originate from a legacy feature, new feature, or false alarm.
We define the likelihood function, that is, the PDF of measurements conditioned on the agent, features, and two data association vectors, as follows: 
\begin{multline}\label{mmp}
f(\mathbf{z}^{(k)}_{t}|\mathbf{u}_{t},{\mathbf{v}}_{t}^{(k)},\mathbf{a}_{t}^{(k)},\mathbf{b}_{t}^{(k)}) \\
=\!\!\!\! \prod\limits_{i\in\mathcal{D}_{t}^{(k)}}\!\!  \!\!\!f(\mathbf{z}^{(k)}_{t,{a}_{t,i}^{(k)}}|\mathbf{u}_{t},\tilde{{\mathbf{v}}}_{t}^{(k)})
\!\!\!\!\prod\limits_{ j\in\mathcal{N}_{t}^{(k)}}\!\!\!\!\!f(\mathbf{z}^{(k)}_{t,j}|\mathbf{u}_{t},\breve{{\mathbf{v}}}_{t}^{(k)}) 
\!\!\!\! \prod\limits_{q\in\mathcal{F}_{t}^{(k)}}\!\!\!\!\!f_{\rm false}(\mathbf{z}^{(k)}_{t,q}),
\end{multline}
where $\mathcal{D}_{t}^{(k)} \triangleq\left\{i \in\{1, \ldots, |\mathcal{K}_{t-\mathbb{J}}^{(k-1+\mathbb{I})}|\}: a_{t, i}^{(k)} \neq 0\right\}$.
The likelihood function is updated to 
\begin{multline}\label{mmmp2}
f(\mathbf{z}^{(k)}_{t}|\mathbf{u}_{t},{\mathbf{v}}_{t}^{(k)},\mathbf{a}_{t}^{(k)},\mathbf{c}_{t}^{(k)}) \\
\propto   \prod\limits_{i\in\mathcal{D}_{t}^{(k)}} \!\! \dfrac{f(\mathbf{z}^{(k)}_{t,{a}_{t,i}^{(k)}}|\mathbf{u}_{t},\tilde{{\mathbf{v}}}_{t}^{(k)})}{f_{\rm false}(\mathbf{z}^{(k)}_{t,{a}_{t,i}^{(k)}})}  \prod\limits_{j\in\mathcal{N}_{t}^{(k)}}  \dfrac{f(\mathbf{z}^{(k)}_{t,j}|\mathbf{u}_{t},\breve{{\mathbf{v}}}_{t}^{(k)})}{f_{\rm false}(\mathbf{z}^{(k)}_{t,j})},
\end{multline} 
where the number of false alarms and newly detected features follows a Poisson distribution with a mean of $\mu_{\rm false}^{(k)}$ and $\mu_{\rm new}^{(k)}$, respectively.
The distribution of each false alarm measurement is described by the PDF $f_{\rm false}(\cdot)$.

\begin{figure*}
	\centering
	\includegraphics[scale=0.56]{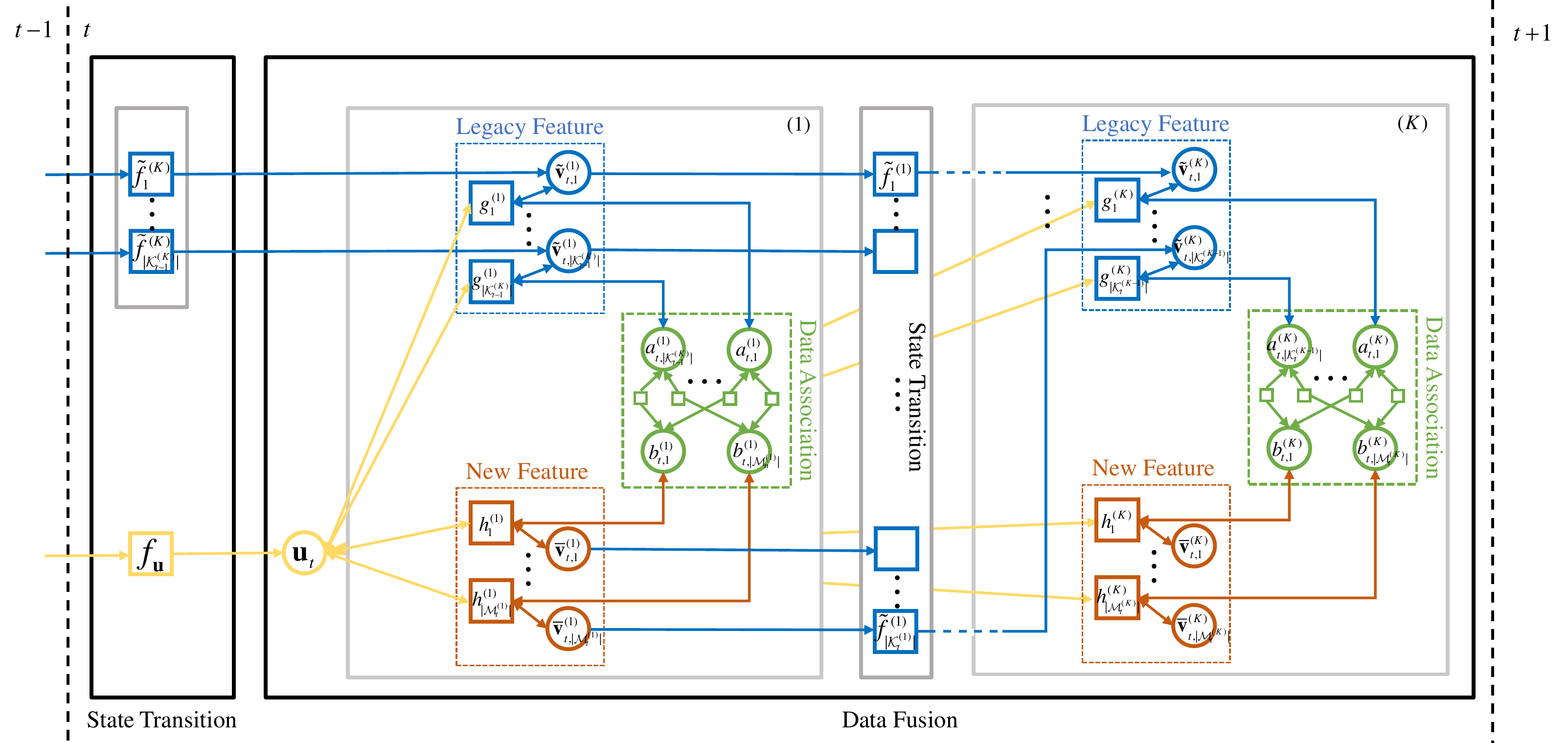}
	\caption{Factor graph of VRP-based BP SLAM.} \label{fig:fg}	
\end{figure*}

\subsubsection{Data Association}\label{DA}
The joint prior PDF of data association vectors,  number-of-measurements vector, and state of new features conditioned on the state of legacy features and the agent is  
\begin{multline}\label{da1}
f(\mathbf{a}_{t}^{(k)},\mathbf{b}_{t}^{(k)},\mathbf{c}_{t}^{(k)},\breve{{\mathbf{v}}}_{t}^{(k)}|\tilde{{\mathbf{v}}}_{t}^{(k)},\mathbf{u}_{t}) \\ 
\propto   \Psi(\mathbf{a}_{t}^{(k)},\mathbf{b}_{t}^{(k)})(\mu_{\rm new}^{(k)})^{|\mathcal{N}_{t}^{(k)}|}(\mu_{\rm false}^{(k)})^{-|\mathcal{N}_{t}^{(k)}|-|\mathcal{D}_{t}^{(k)}|}\\
\times  \!\!\prod\limits_{i\in\mathcal{D}_{t}^{(k)}}\!\!\!P_{\rm d}(\mathbf{u}_{t},\mathbf{{p}}_{t,{a}_{t,i}^{(k)}}^{(k)}) \!\!\!\!\prod\limits_{i'\in\bar{\mathcal{D}}_{t}^{(k)}}\!\!\!\!\left(1\!-\!P_{\rm d}(\mathbf{u}_{t},\mathbf{{p}}_{t,i'}^{(k)})\right)\!\!\!\!
\\
\times \prod\limits_{j\in\mathcal{N}_{t}^{(k)}}\!\! \!\!f_{\rm new}(\breve{{\mathbf{v}}}_{t,j}^{(k)}|\mathbf{u}_{t})\!\!\!\!\prod\limits_{j'\in\bar{\mathcal{N}}_{t}^{(k)}}\!\!\!\! f_D(\breve{{\mathbf{v}}}^{(k)}_{t,j'}),
\end{multline}
where $\bar{\mathcal{D}}_{t}^{(k)}= \mathcal{K}_{t-\mathbb{J}}^{(k-1+\mathbb{I})} \backslash {\mathcal{D}}_{t}^{(k)}$, $\bar{\mathcal{N}}_{t}^{(k)}= \mathcal{M}_{t}^{(k)} \backslash {\mathcal{N}}_{t}^{(k)}$, ``$\backslash$" represents the complement operator, $P_{\rm d}(\cdot) \!\!\in\!\! (0,1]$ is 
the probability that a feature is ``detected" in the sense that it generates a measurement, and $f_{\rm new}(\cdot)$ represents the PDF of the newly detected features.

\subsubsection{Data Fusion}
The joint posterior PDF of the agent, features, and data association vectors conditioned on measurements for all $T$ time slots is $f(\mathbf{u}_{1:T}, {\mathbf{v}}_{1:T},\mathbf{a}_{1:T},\mathbf{b}_{1:T}|\mathbf{z}_{1:T})$.
According to \eqref{jointP1}, the joint posterior PDF is the product of \eqref{mark}, \eqref{mmmp2}, and \eqref{da1}.
Given that the factorizations of \eqref{mmmp2} and \eqref{da1} 
are in the perspective of legacy and new features, we rewrite \eqref{mmmp2} and \eqref{da1} in a more concise form.
${\mathcal{D}}_{t}^{(k)}$ is the set of legacy feature indexes that generate measurements at time $t$. 
For $i \in {\mathcal{D}}_{t}^{(k)}$,
we have $\tilde{r}_{t,i} = 1$ and ${a}_{t,i}^{(k)}\neq 0$. 
On the contrary,
$\bar{\mathcal{D}}_{t}^{(k)}$ is the set of legacy feature indexes that do not generate measurements at time $t$. For $i \in \bar{\mathcal{D}}_{t}^{(k)}$,
we have $\tilde{r}_{t,i} = 1$ and ${a}_{t,i}^{(k)}= 0$.
We define a function $g(\mathbf{u}_{t},\tilde{\mathbf{v}}_{t,i}^{(k)},\mathbf{a}^{(k)}_{t,i};\mathbf{z}^{(k)}_{t,i})$. When $\tilde{r}_{t,i} = 1$, we have 
\begin{multline}\label{g1}	g(\mathbf{u}_{t},\tilde{\mathbf{v}}_{t,i}^{(k)},\mathbf{a}^{(k)}_{t,i};\mathbf{z}^{(k)}_{t,i})\\
=
\left\{
\begin{array}{ll}
\dfrac{f(\mathbf{z}^{(k)}_{t,{a}_{t,i}^{(k)}}|\mathbf{u}_{t},\tilde{{\mathbf{v}}}_{t}^{(k)})P_{\rm d}(\mathbf{u}_{t},\mathbf{{p}}_{t,{a}_{t,i}^{(k)}}^{(k)})}{\mu_{\rm false}^{(k)}f_{\rm false}(\mathbf{z}^{(k)}_{t,{a}_{t,i}^{(k)}})}, &   {a}_{t,i}^{(k)}\neq 0,\\
1-P_{\rm d}(\mathbf{u}_{t},\mathbf{{p}}_{t,i}^{(k)}), & {a}_{t,i}^{(k)}= 0.
\end{array} \right.
\end{multline}
When $\tilde{r}_{t,i}\!\! =\!\! 0$, we have
$
g(\!\mathbf{u}_{t},\!\tilde{\mathbf{v}}_{t,i}^{(k)}\!,\!\mathbf{a}^{(k)}_{t,i};\mathbf{z}^{(k)}_{t,i}\!)=1.
$
Moreover, ${\mathcal{N}}_{t}^{(k)}$ denotes the set of measurement indexes generated by new features, which means that for $j \in {\mathcal{N}}_{t}^{(k)}$, we have $\breve{r}_{t,j}\!\!=\!\!1$ and ${b}_{t,j}^{(k)}\!\!=\!\!0$. By contrast,
$\bar{\mathcal{N}}_{t}^{(k)}$ denotes the set of measurements that are not generated by new features, and we have $\breve{r}_{t,j}\!\!=\!\!0$ for $j \in \bar{\mathcal{N}}_{t}^{(k)}$.
We define a function $h(\mathbf{u}_{t},\breve{\mathbf{v}}_{t,j},\mathbf{b}^{(k)}_{t,j};\mathbf{z}^{(k)}_{t,j})$. When $\breve{r}_{t,j}=1$, we have 
\begin{multline}\label{h1}
h(\mathbf{u}_{t},\breve{\mathbf{v}}_{t,j},\mathbf{b}^{(k)}_{t,j};\mathbf{z}^{(k)}_{t,j})\\
=\left\{
\begin{array}{ll}
  0,  &     {b}_{t,j}^{(k)}\neq 0,\\
  \dfrac{\mu_{\rm new}^{(k)}f_{\rm new}(\breve{\mathbf{{v}}}_{t,j}^{(k)}| \mathbf{u}_{t})f(\mathbf{z}^{(k)}_{t,j}|\mathbf{u}_{t},\breve{\mathbf{v}}_{t}^{(k)})}{\mu_{\rm false}^{(k)}f_{\rm false}(\mathbf{z}^{(k)}_{t,j})}, &     {b}_{t,j}^{(k)}= 0.
\end{array} \right.
\end{multline}
When $\breve{r}_{t,j}\!\!=\!0$, we have  
$
h(\mathbf{u}_{t},\breve{\mathbf{v}}_{t,j},\mathbf{b}^{(k)}_{t,j};\mathbf{z}^{(k)}_{t,j}) = f_D(\breve{\mathbf{{v}}}_{t,j}).
$
The joint posterior PDF is given by \eqref{final},
\begin{figure*}
\begin{multline}\label{final}
	f(\mathbf{u}_{1:T},{\mathbf{v}}_{1:T},\mathbf{a}_{1:T},\mathbf{b}_{1:T}|\mathbf{z}_{1:T})
	\propto   \underbrace{\prod\limits_{t=1}^{T}f(\mathbf{u}_{t}|\mathbf{u}_{t-1})\prod\limits_{k=1}^{K}\prod\limits_{l=1}^{|\mathcal{K}_{t-\mathbb{J}}^{(k-1+\mathbb{I})}|}f(\tilde{{\mathbf{v}}}^{(k)}_{t,l}|{\mathbf{v}}^{(k-1+\mathbb{I})}_{t-\mathbb{J},l})}_{(a)}\\
 \times \prod\limits_{t=1}^{T}\left(\underbrace{\prod\limits_{k=1}^{K} 
		\Psi(\mathbf{a}_{t}^{(k)}\!,\!\mathbf{b}_{t}^{(k)}\!)}_{(c)}  \underbrace{\!\!\prod\limits_{i=1}^{|\mathcal{K}_{t-\mathbb{J}}^{(k-1+\mathbb{I})}|} \!\! g(\mathbf{u}_{t},\tilde{\mathbf{v}}_{t,i}^{(k)}\!,\!\mathbf{a}^{(k)}_{t,i}\!;\!\mathbf{z}^{(k)}_{t,i})\! \!\prod\limits_{j=1}^{|\mathcal{M}_{t}^{(k)}|}\!\!h(\mathbf{u}_{t},\breve{\mathbf{v}}^{(k)}_{t,j}\!,\!\mathbf{b}^{(k)}_{t,j}\!;\!\mathbf{z}^{(k)}_{t,j})}_{(b)}\right),
\end{multline}
    \hrulefill
\end{figure*}
where (a), (b), and (c) correspond to the state transition, measurement evaluation, and data association phases, respectively.

\subsubsection{VRP and PA Refinement}\label{refine}
We can obtain the mean and variance of the 
legacy features in the set  $\mathcal{K}_{t-\mathbb{J}}^{(k-1+\mathbb{I})}$
by marginalizing \eqref{final}.
Let $\tilde{{\mathbf{p}}}_{t-\mathbb{J},i}^{(k-1+\mathbb{I})}$ and ${\tilde{\sigma}_{t-\mathbb{J},i}}^{2\ (k-1+\mathbb{I})}$ denote the mean
and variance of the $i$-th legacy feature, respectively, where $i = 1,\ldots,|\mathcal{K}_{t-\mathbb{J}}^{(k-1+\mathbb{I})}|$.
Similarly, let  ${\mathbf{p}}_{t,j}^{(k)}$ and ${\sigma^2}_{t,j}^{(k)}$ denote the mean
and variance of the $j$-th feature obtained at the current time, respectively, where $j = 1,\ldots,|\mathcal{K}_{t}^{(k)}|$.
We define the distance to represent the similarity of the two features as
\begin{equation}\label{dis}
D_{\rm sim}(i,j) = \| {\mathbf{p}}_{t,j}^{(k)} - \tilde{{\mathbf{p}}}_{t-\mathbb{J},i}^{(k-1+\mathbb{I})} \|,
\end{equation}
where $i = 1,\ldots,|\mathcal{K}_{t-\mathbb{J}}^{(k-1+\mathbb{I})}|$ and $j = 1,\ldots,|\mathcal{K}_{t}^{(k)}|$.
We select a threshold $\delta_{\rm sim}$. If $D_{\rm sim}(i,j) < \delta_{\rm sim}$, the two features are regarded as the same feature. 
For the same feature ${\mathbf{p}}_{t,j}^{(k)}$ and $\tilde{{\mathbf{p}}}_{t-\mathbb{J},i}^{(k-1+\mathbb{I})}$, if ${\sigma^2}_{t,j}^{(k)} > {\tilde{\sigma}_{t-\mathbb{J},i}}^{2\ (k-1+\mathbb{I})}$, which means the accuracy of the estimated feature at the current time is worse than that at the previous time, we do not update the feature by letting $\tilde{{\mathbf{p}}}_{t-\mathbb{J},i}^{(k-1+\mathbb{I})}$ replace ${\mathbf{p}}_{t,j}^{(k)}$ in the set $\mathcal{K}_{t}^{(k)}$. Otherwise, we refine the feature with its latest estimate.
The refinement method of PAs is similar to that of VRPs.

\subsubsection{Factor Graph and Message Passing}
A factor graph representing the factorization in \eqref{final} is depicted in Fig. \ref{fig:fg}.
The messages are sent forward in time.
Specifically,
the messages first undergo the state transition phase, in which the messages from  PA $K$ at time $t-1$ pass through factor nodes $f_{\mathbf{u}}$ and $\tilde{f}^{(K)}_{l}$ for $l = 1,\ldots, |\mathcal{K}_{t-1}^{(K)}|$ to generate 
prediction messages.
Second, measurement evaluation calculations are processed by factor node $g^{(1)}_{l}$ for legacy features and factor node $h^{(1)}_{l'}$ for new features in parallel, where $l' = 1,\ldots, |\mathcal{M}_{t}^{(1)}|$.
Third, the output messages of factor nodes $g^{(1)}_{l}$ and $h^{(1)}_{l'}$ are passed to the data association variable nodes $a_{t,l}^{(1)}$ and $b_{t,l'}^{(1)}$, respectively. 
The messages are calculated iteratively among $a_{t,l}^{(1)}$ and $b_{t,l'}^{(1)}$, and this process is called the loopy data association phase.
After the last iteration, the messages are passed back from $a_{t,l}^{(1)}$ to $g^{(1)}_{l}$ and from $b_{t,l'}^{(1)}$ to $h^{(1)}_{l'}$. 
The messages are subsequently updated by factor nodes $g^{(1)}_{l}$ and $h^{(1)}_{l'}$.
Fourth, the updated messages pass through factor nodes $\tilde{f}^{(1)}_{l}$ for $l = 1,\ldots, |\mathcal{K}_{t}^{(1)}|$ to generate 
prediction messages.
The subsequent message passing order is the same as that for PA $1$.
When the message passing is completed in PA $K$ at time $t$, the message passing process at time $t$ is completed.
Lastly, the messages are fused at variable nodes $\mathbf{u}_t$, $\tilde{\mathbf{{v}}}_{t,l}^{(K)}$, and $\breve{\mathbf{{v}}}_{t,l'}^{(K)}$.
Once the messages are available, the belief approximating the desired marginal posterior PDFs is obtained.

\section{Numerical Results}	\label{result}

\begin{figure}
	\centering
	\includegraphics[scale=0.4]{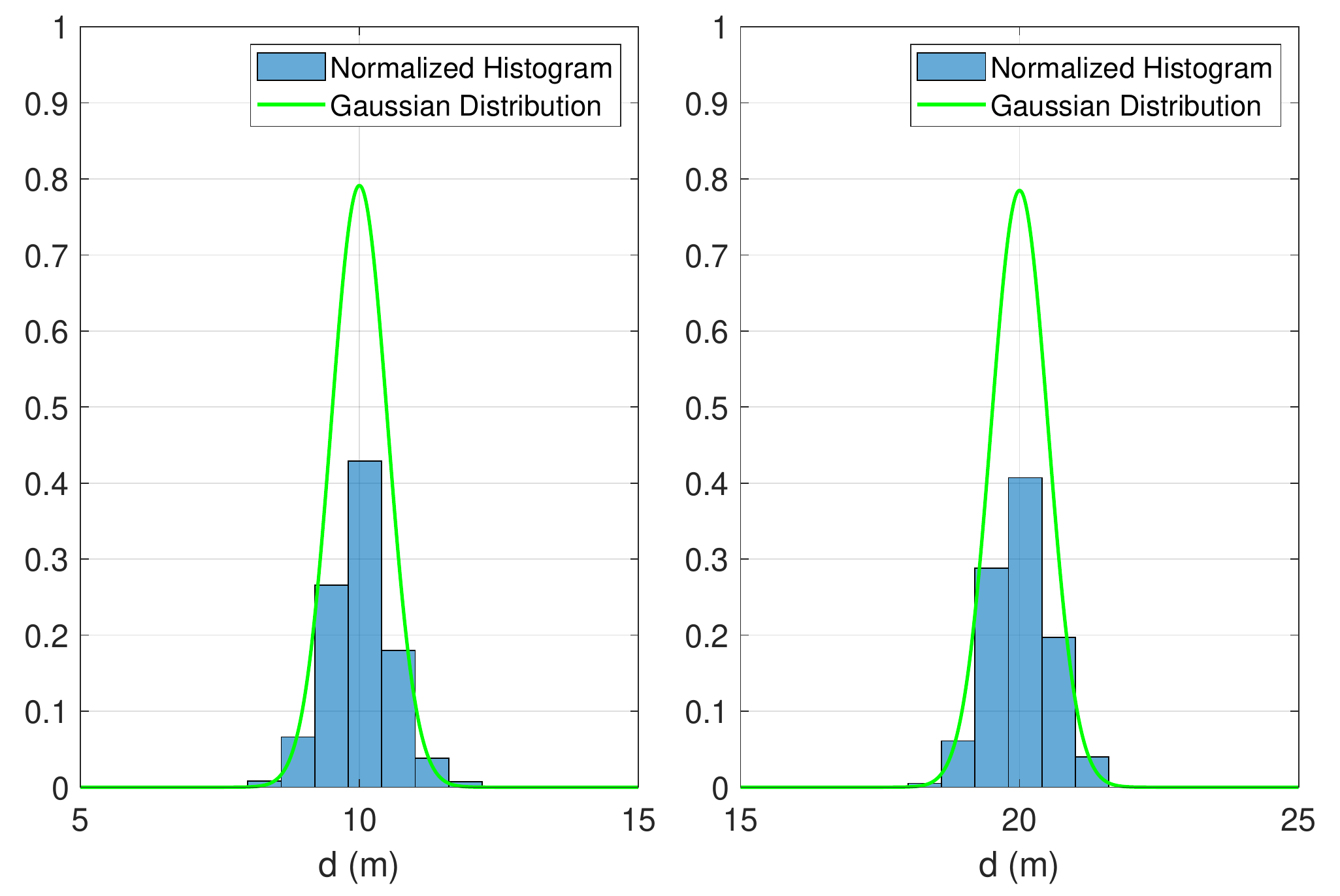}
	\caption{Distribution of estimated distance by the NOMP algorithm.} \label{fig:ddh}	
\end{figure}

\begin{figure}
	\centering
	\includegraphics[scale=0.6]{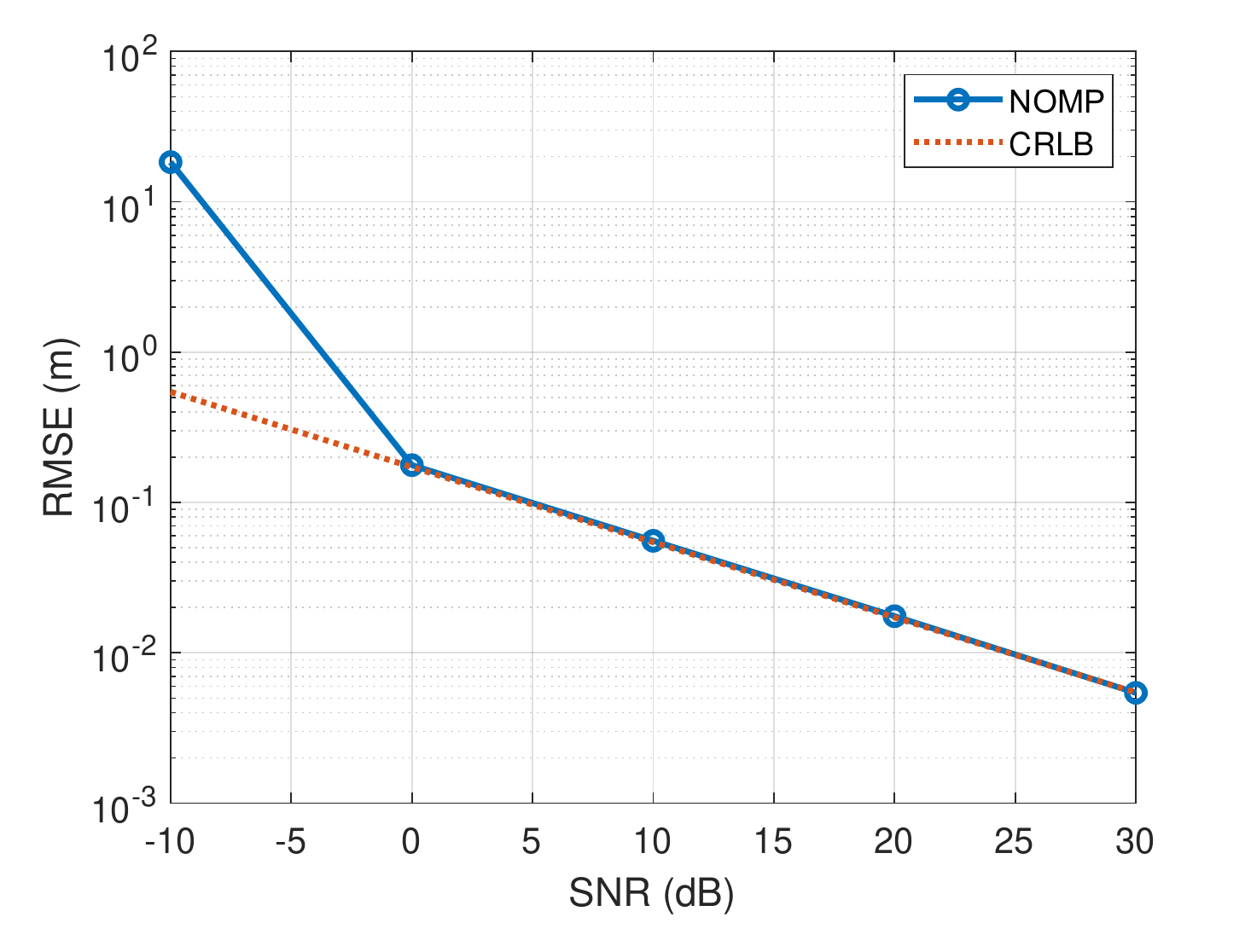}
	\caption{Comparison of RMSE and CRLB of the distance.} \label{fig:nomp}	
\end{figure}

\subsection{Gaussian Distribution Assumption of Distance}
To evaluate the assumption of Gaussian distribution for $d$ in Remark 1, we apply the Newtonized orthogonal matching pursuit (NOMP) algorithm \cite{hy} to extract $d$ from the received signal in \eqref{ren}.
We then plot the normalized histogram of the estimated $d$. 
Finally, we compare the distribution of the estimated $d$ with the Gaussian distribution assumption given in Remark 1. 

The simulation results are shown in Fig. \ref{fig:ddh}, where SNR $=-10$ dB,  the number of subcarrier $N_s = 200$, subcarrier spacing ${\Delta f = 120}$ KHz, RCS ${\varepsilon = 1}$, and the central frequency of carrier ${f=28}$ GHz. 
Distance $d$ is set to $10$ and $20$ m in the left and right hand sides of Fig. \ref{fig:ddh}, respectively. 
We estimate $d$ by using $10,000$ independent Monte Carlo simulations. Thus, we obtain $10,000$ estimations of $d$.
Fig. \ref{fig:ddh} shows the normalized histogram of $10,000$ estimations of $d$ and the Gaussian probability distribution function $\mathcal{N}(d,\sigma^2_d)$, where $\sigma^2_d$ is calculated in accordance with \eqref{variance}.
The results indicate that the estimated $d$ is close to the Gaussian distribution. Therefore, we can assume that $d$ follows the Gaussian distribution.

\subsection{Estimation Performance of Distance}
In this subsection, we evaluate the estimation performance of channel parameters used in the proposed SLAM mechanism.
We take the active sensing as an example, and apply the NOMP algorithm \cite{hy} to estimate the parameter $d$, here we ignore the subscript $m$. 
The simulation result is shown in Fig. \ref{fig:nomp}, where subcarrier spacing ${\Delta f = 120}$ KHz, RCS ${\varepsilon = 1}$, the central frequency of carrier ${f=28}$ GHz, the number of subcarrier $N_s = 200$, and $d=20$ m.
We estimate $d$ by NOMP algorithm for $T=10,000$ independent Monte Carlo simulations, and the estimations are denoted as $\hat{d}_i$, for $i = 1,\ldots,T$. 
The RMSE is defined as 
$\mbox{RMSE}({d})=\sqrt{\sum_{i=1}^{T}||\hat{d}_i-d||^2/T}$.
The CRLB is calculated in accordance with  \eqref{variance}.
The results in Fig. \ref{fig:nomp} show that by using NOMP algorithm, when SNR is larger than $0$ dB, the RMSE can achieve the CRLB, therefore, we can use the derived result in \eqref{variance} as the variance of the estimated $d$.
For passive sensing, the NOMP algorithm can be easily extended to handle the cases with multi-path, and the processes are explained in detail in \cite{hy}.

\subsection{Approximation Error of Distance Uncertainty}
In this subsection, we analyze the approximation error of the distance uncertainty given in Theorem \ref{T1}.
Given that we consider a single distance in this section, we ignore the subscript $m$.
In the following simulation, subcarrier spacing ${\Delta f = 120}$ KHz, RCS ${\varepsilon = 1}$, and the central frequency of carrier ${f=28}$ GHz. 
We denote the lower bound of the variance calculated according to \eqref{fim} as  ``var$_{\rm true}$", and calculated according to \eqref{variance} as ``var$_{\rm appr}$".

In Fig.\ref{fig:variance} (a), 
the number of subcarriers is ${N_{\rm s}=1,000}$, thus, the bandwidth ${B=120}$ MHz.
When $d$, the distance between the RSP to the RP, increases, the variance increases.
In Fig.\ref{fig:variance} (b), distance $d$ is fixed to $10$ m, and SNR is fixed to $0$ dB. When the number of subcarriers $N_{\rm s}$ increases from $200$ to ${1,000}$, the variance decreases.
The approximation error is negligible when 
$N_{\rm s}$ is larger than $400$.
In Fig.\ref{fig:variance} (c), distance $d$ is fixed to $10$ m, and the number of subcarriers $N_{\rm s}$ is set to ${1,000}$.
The variance decreases when SNR increases.
According to the simulation result, the approximation error is negligible. Thus, the derived $ {\big( \frac{8\pi^2B^2N_{\rm s}}{3c^2} {\rm SNR}_{m} \big)^{-1}}$ as the lower bound of the variance of $\hat{d}_m$ is reasonable, as described in Theorem \ref{T1}.

\subsection{Approximation Error of VRP Uncertainty}
In this subsection, we investigate the approximation error of VRP uncertainty, because the conclusions from Theorem \ref{T2} and Corollary \ref{L2} are obtained by first-order Taylor approximations.
In the following simulation, 
 subcarrier spacing ${\Delta f = 120}$ KHz,
the number of subcarriers is ${N_{\rm s}=200}$,
the central frequency of carrier ${f=28}$ GHz,
RP ${=[0,0]}$,
SNR is $0$ dB when the smallest distance from RP to the reflective surface is $10$ m, and $3$ RSPs are considered.
When modeling RCS by ${\varepsilon_m = \gamma \cos^{2\eta} \psi_m}$, we select ${\gamma=1}$ and ${\eta=0.2}$.

The least squares (LS) estimator is adopted as the benchmark.
We estimate VRP by using the LS estimator for ${10,000}$ independent Monte Carlo simulations and then calculate the mean and variance of the ${10,000}$ results.
Next, we compare the mean and variance of the VRP calculated by Theorem \ref{T2} and Corollary \ref{L2} with those calculated by the LS estimator.
``LS" denotes the result of the benchmark, and ``Taylor Appr." represents the result of the proposed method.
Figs. \ref{fig:mvavariance} (a) and (b) show the Gaussian distribution of the x and y coordinates of VRP, respectively, 
when the smallest distance between RP and the reflective surface is $50$ m.
The upper and lower subfigures in
Figs. \ref{fig:mvavariance} (c) and (d) show the Gaussian distribution of the x and y coordinates of VRP, respectively,  
when the smallest distance between RP and the reflective surface is ${1,000}$ m.
The approximation error of the mean and variance obtained by ``LS" and ``Taylor Appr." is acceptable.
Therefore, the conclusions from Theorem \ref{T2} and Corollary \ref{L2} can be used in most scenarios with a range below the kilometer level.

\begin{figure}
	\centering
	\includegraphics[scale=0.5]{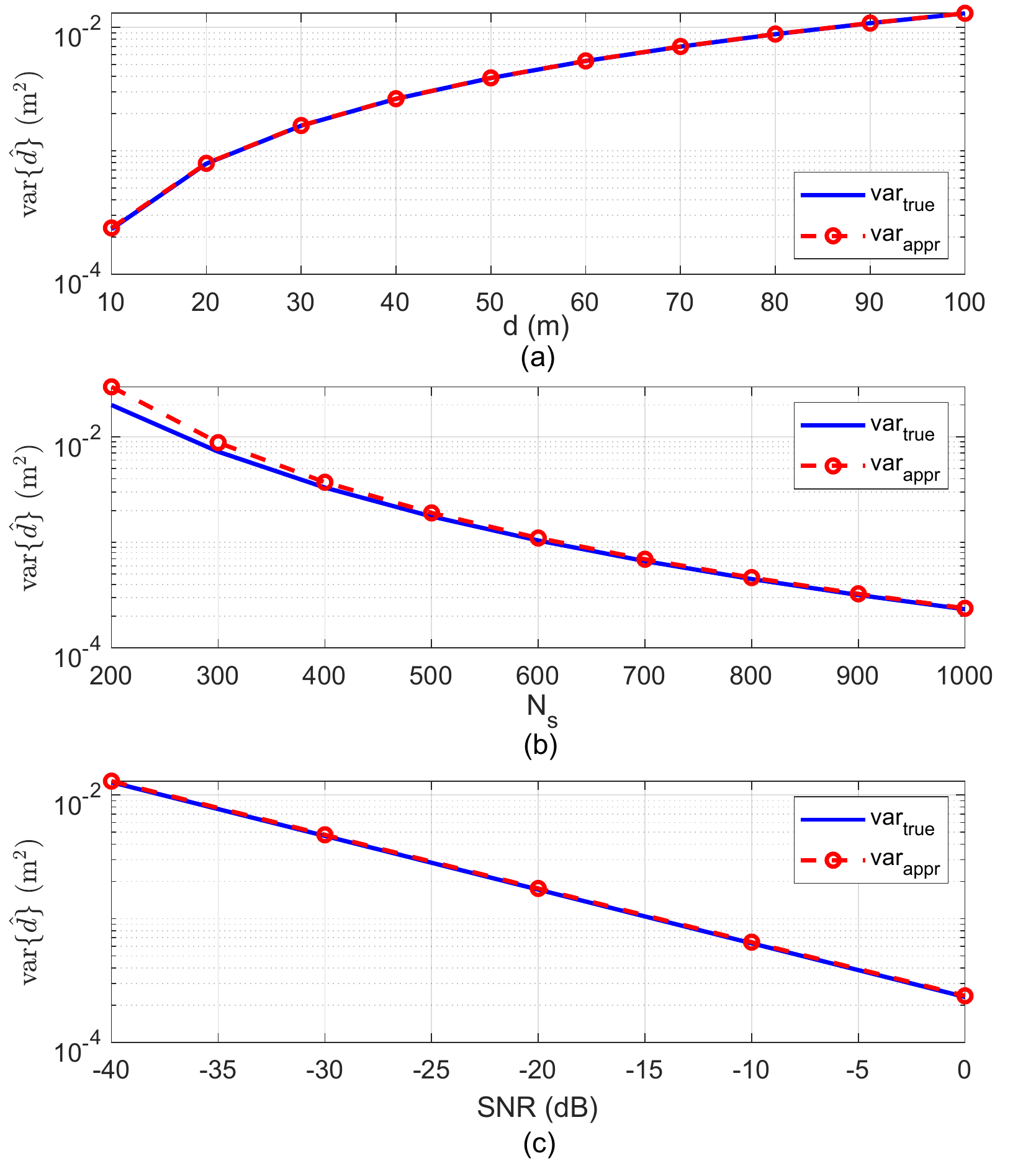}
	\caption{Comparison of the approximated lower bound of variance with the true value.} \label{fig:variance}	
\end{figure}

\begin{figure}
	\centering
	\vspace{-0.15cm}
	\includegraphics[scale=0.41]{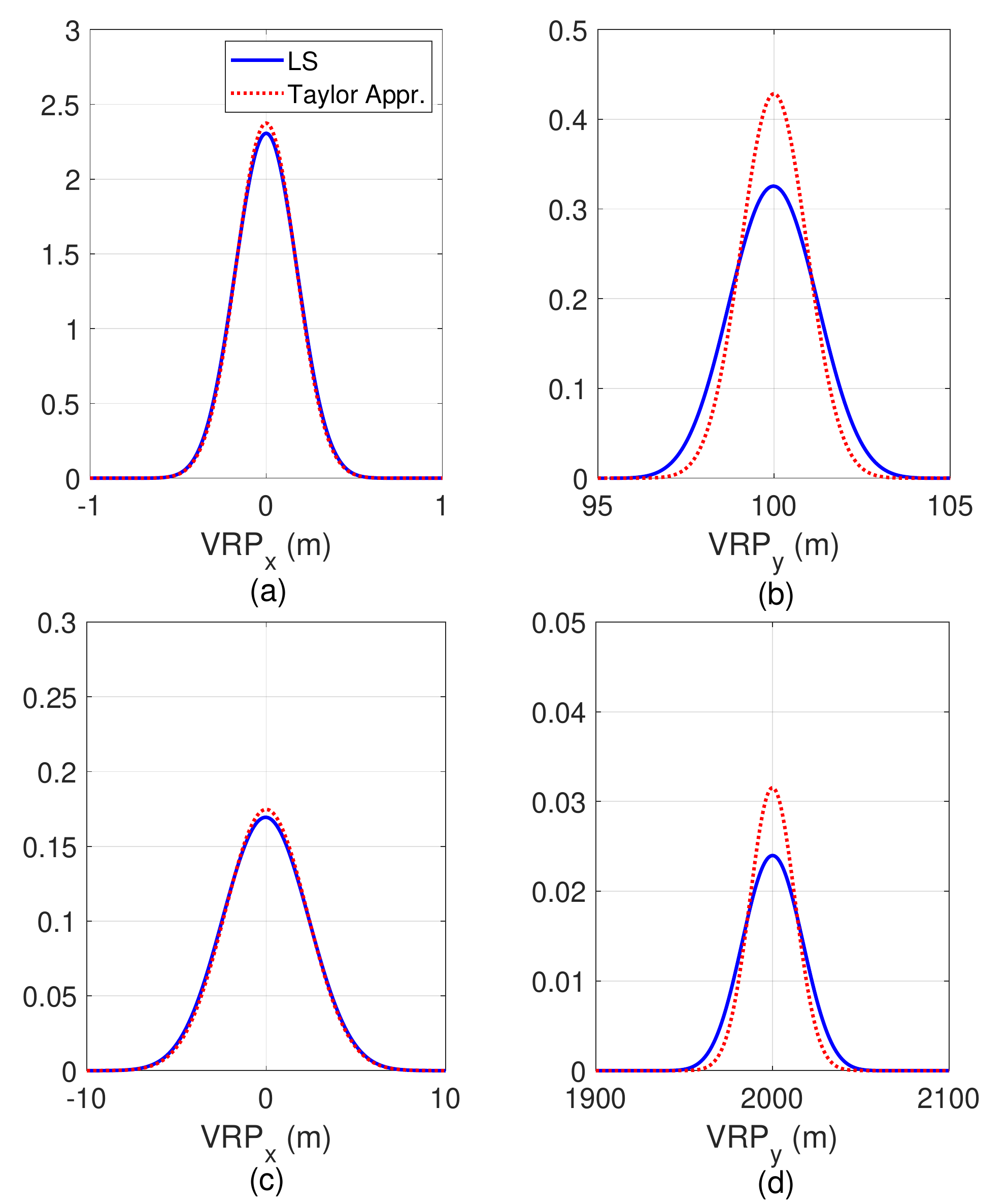}
	\caption{ Comparison of the VRP distribution obtained by Taylor approximation with that from LS-based Monte-Carlo simulation. (a) and (b) denote the Gaussian distribution of the x and y coordinates of VRP, respectively, 
		when the smallest distance between  RP and the reflective surface is $50$ m.
		(c) and (d) show the Gaussian distribution of the x and y coordinates of VRP, respectively, 
		when the smallest distance between  RP and the reflective surface is ${1,000}$ m.
	} \label{fig:mvavariance}	
\end{figure}

\begin{figure*}
	\centering
	\includegraphics[scale=0.64]{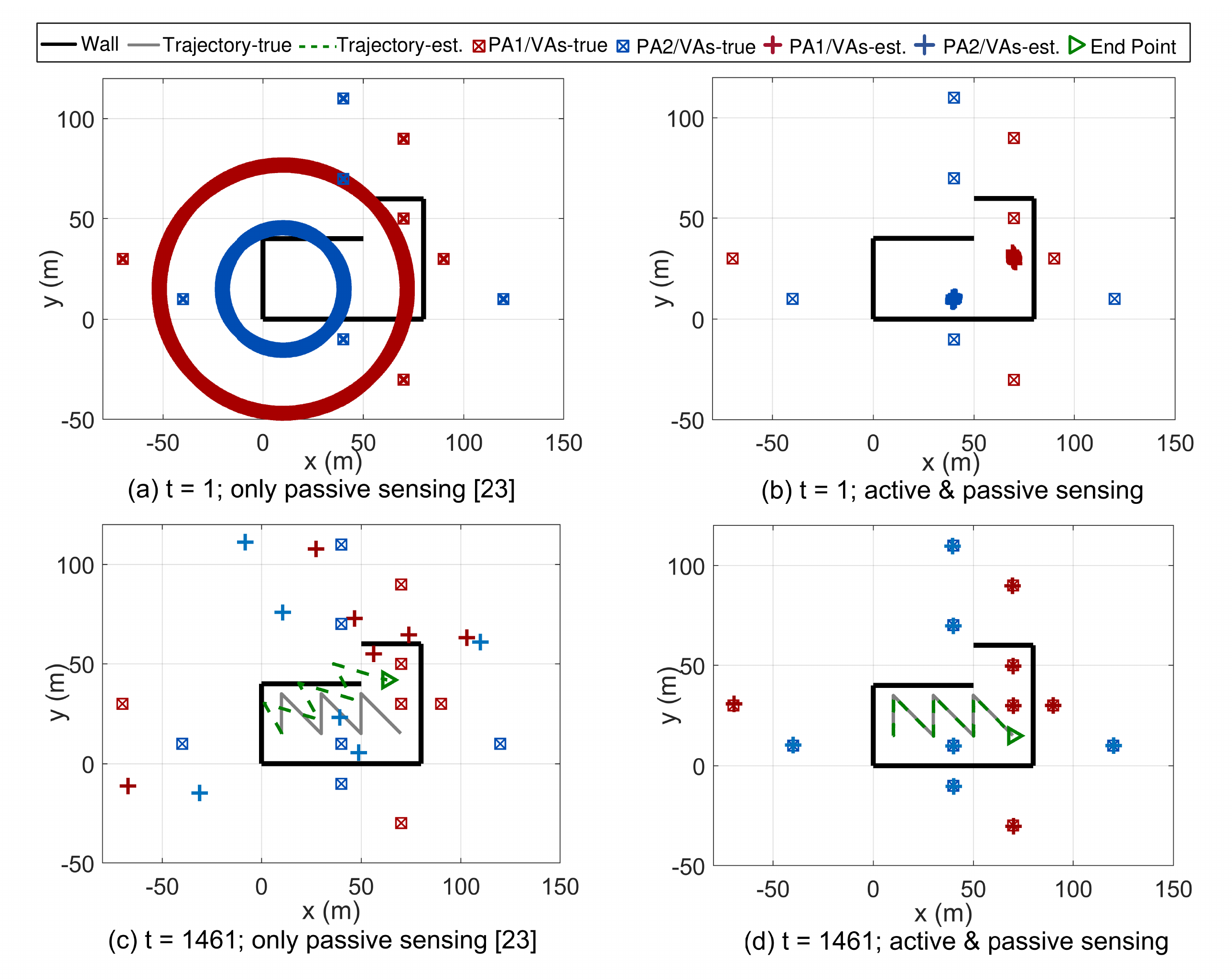}
	\caption{Comparison of the proposed mechanism with only the passive sensing mechanism \cite{BP2}. (a) and (b) depict the estimated PA1 and PA2 when ${t = 1}$. 
		(c) and (d) depict the estimated PAs, VAs, and agent trajectory when ${t = 1,461}$. 	
	} \label{fig:initialPA}	
\end{figure*}

\subsection{Performance of the Proposed SLAM Mechanism}

In this subsection, we analyze the performance of the proposed SLAM mechanism that combines
active and passive sensing.
We use the floor plan shown in Fig. \ref{fig:initialPA}, the size of which is approximately $80\times 60\  \rm{m}^2$.
The ROI is a circular disk with a radius of $180$ m.
Most of the messages leaving the factor nodes cannot be solved in a closed form due to the contained integrals. 
We use particle-based implementation to approximate the continuous messages \cite{BP1}.
The measurement noise follows a Gaussian distribution with zero mean and standard deviation of $\sigma_t=0.1$ m for TOA measurements.
The state transition of the agent is given by $\mathbf{u}_{t}^{\rm T} = \mathbf{A} \mathbf{u}^{\rm T}_{t-1} + \bm{\omega}_{t}$, where the variance of the driving process $\bm{\omega}_{t}$ is $0.0278$.
The state transition PDFs of features are given by Dirac delta functions. 
In accordance with \cite{BP2}, we introduced a small driving process for numerical stability, and the variance of the driving process is $ 10 ^ {-8}$.
The parameters involved in the algorithm are as follows:  detection probability 
$P_{\rm d}=0.95$;  survival probability $P_{\rm s}=0.999$;  mean of false alarms $\mu_{\rm false}=1$;  mean of newly born features $\mu_{\rm new}=10^{-4}$;  step length is $0.1$ m;  unreliability threshold is $ 10^{-4}$;  similarity threshold  $\delta_{\rm sim}$ is $1$ m;  detection threshold is $0.5$; and the number of particles is $10^5$.

\subsubsection{Assistance of Active Sensing to Passive Sensing}\label{N1}

In this subsection, we analyze the assistance provided by active sensing to passive sensing in the proposed mechanism. We initialize the location of PAs with the assistance of the prior VRPs obtained by active sensing.
The start point of the mobile agent is $[10,15]$, which is also the RP. 
The mean and variance of the VRPs can be calculated in accordance with Corollary \ref{L2}.
When subcarrier spacing ${\Delta f = 120}$ KHz,
the number of subcarriers is ${N_{\rm s}=200}$, the central frequency of carrier ${f=28}$ GHz, SNR is $0$ dB when the smallest distance from RP to the reflective surface is $10$ m, and two RSPs are considered.
According to the floor plan, five walls (reflective surfaces) are present, but only four are observed by active sensing when ${t = 1}$.
The means of four VRPs are $[-10, 15]$, $[10,65]$, $[150, 15]$, and $[10, -15]$,
and the variances of the four VRPs obtained in active sensing are $0.08$, $0.7$, $4$, and $0.2$ m$^2$.

We compare the proposed mechanism in which active sensing assists passive sensing with the passive sensing only mechanism \cite{BP2}. The results are shown in Fig. \ref{fig:initialPA}. 
Without the assistance of active sensing, the PAs are distributed on the circles, as shown in Fig. \ref{fig:initialPA} (a).
With the assistance of active sensing, 
the initialized PAs converge near the true locations when $t=1$, as shown in Fig. \ref{fig:initialPA} (b).
With the accumulation over time, the PAs are refined, and the corresponding VAs and the trajectory are obtained accurately by the proposed mechanism when ${t=1,460}$, as shown in Fig. \ref{fig:initialPA} (d).
However, the estimated VAs and trajectory rotate when the passive sensing only mechanism is used, as shown in Fig. \ref{fig:initialPA} (c),
because the possible values of PA are distributed on the circle without any prior information about the PA, as shown in Fig. \ref{fig:initialPA} (a). When PA converges to a wrong point on the circle, BP SLAM generates features and the trajectory relative to the wrong PA, leading to the rotation of the result.

\begin{figure*}
	\centering
	\includegraphics[scale=0.64]{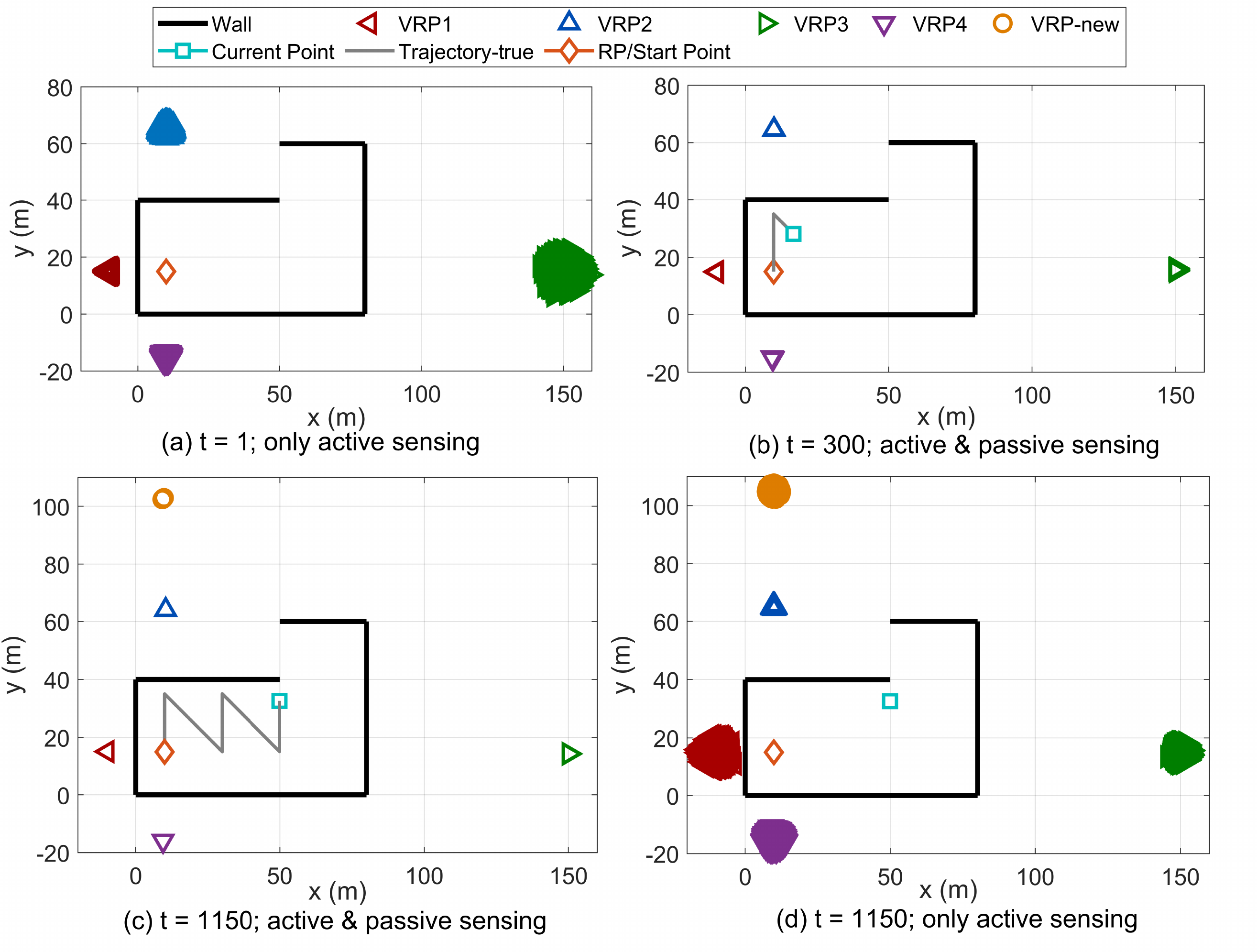}
	\caption{Comparison of the VRP estimation performance of the proposed mechanism and that of the active sensing only mechanism. (a) depicts the result of active sensing when ${t=1}$, and the result is used as the prior information of the VRPs. (b) plots the result of the proposed mechanism when ${t=300}$, and the accuracy of the estimated VRPs are improved. (c) shows the result of the proposed mechanism when ${t=1,150}$, and the accuracy of the estimated VRPs are further improved. Moreover, a new VRP is detected. (d) depicts the result of the active sensing only mechanism when the agent is located at ${t=1,150}$. 
	} \label{fig:mvarefine}	
\end{figure*}

\subsubsection{Assistance of Passive Sensing to Active Sensing}\label{ex}

In this subsection, we investigate the assistance provided by passive sensing to active sensing.
The simulation settings are similar to those in Section \ref{N1}.
When ${t= 1}$, four walls are observed by active sensing, but the wall that starts from $[50,60]$ and ends at $[80,60]$ is blocked.
Therefore, we obtain four VRPs with
mean values of $[-10, 15]$, $[10,65]$, $[150, 15]$, and $[10, -15]$ and variances of $0.08$, $0.7$, $4$, and $0.2$ m$^2$.
The particles of VRPs obtained by active sensing when $t=1$ are depicted in Fig. \ref{fig:mvarefine} (a).
The farther the wall is from RP, the greater the variance of the corresponding VRP is.
``VRP$m$-prior" denotes the particles of VRP$m$ generated by active sensing when ${t=1}$.
The prior information of VRPs is used to initialize the PAs.
Then, the proposed mechanism can work.
As shown in Fig. \ref{fig:mvarefine} (b),
the variances of four VRPs are reduced when ${t=300}$,
which verifies the capability of passive sensing in VRP refinement with the proposed mechanism.
Moreover, when ${t=1,150}$, the wall that starts from $[50,60]$ and ends at $[80,60]$ is detected by passive sensing with the proposed mechanism, as shown in Fig. \ref{fig:mvarefine} (c), where five VRPs are estimated and refined.
For comparison, we depict the VRPs obtained by the active sensing when ${t=1,150}$ in Fig. \ref{fig:mvarefine} (d).
Although five VRPs are detected by active sensing, the variances are much larger than that those obtained by the proposed mechanism.
Therefore, with the enrichment of the trajectory, passive sensing can extend the capability of active sensing.
Legacy VRPs can be refined, and new VRPs can be estimated accurately by passive sensing without any prior information.

\begin{figure*}
	\centering
	\includegraphics[scale=0.85]{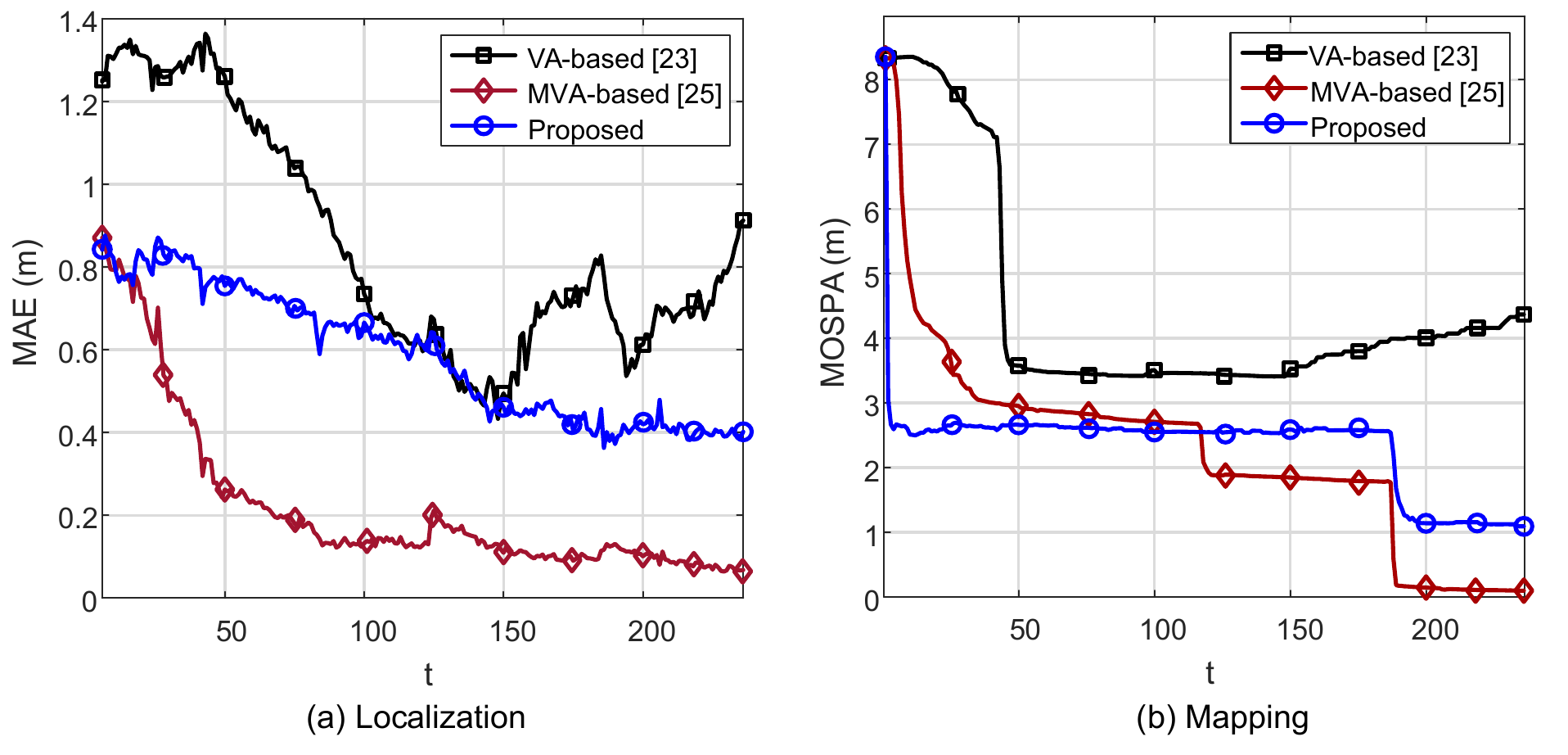}
	\caption{Comparison of the localization and mapping performance obtained by $100$ times of Monte Carlo simulations.} \label{fig:benchmark}	
\end{figure*}

\begin{figure*}
	\centering
	\includegraphics[scale=0.658]{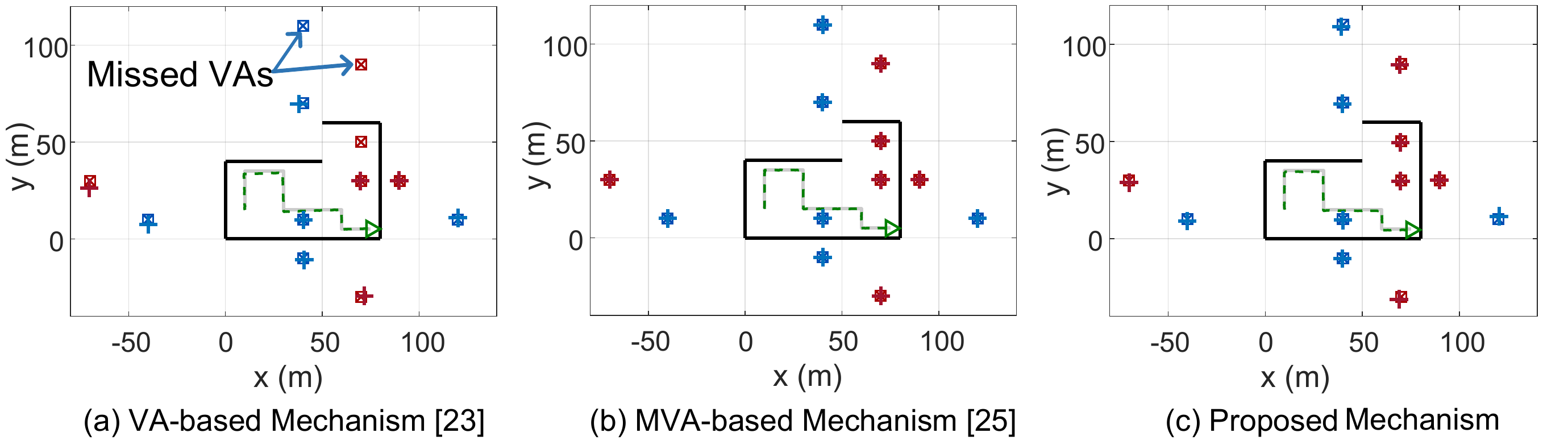}
	\caption{SLAM performance for a single simulation.} \label{fig:slambenchmark}	
\end{figure*}
\subsubsection{Comparison of Different SLAM Mechanisms}

In this subsection, we compare the proposed mechanism with the VA-based \cite{BP2} and  MVA-based \cite{MVA} mechanisms.
The locations of two PAs are perfectly known in the VA and  MVA-based mechanisms.
However, no prior information of PAs is available in 
the proposed mechanism.
The step length is set to $0.5$ m, the true trajectory is shown as the grey solid line in Fig. \ref{fig:slambenchmark}, and the other simulation settings are similar to those in Section \ref{N1}.

The results in Fig. \ref{fig:benchmark} are obtained by $100$ times of Monte Carlo simulations.
Therefore, we select the mean absolute error (MAE) and mean optimal subpattern assignment (MOSPA) error to measure the performance of localization and mapping, respectively.
As shown in Fig. \ref{fig:benchmark} (a), the MVA-based mechanism exhibits the best performance in localization, whereas the VA-based mechanism has the worst performance, because the MVA-based mechanism knows the perfect locations of PAs but the proposed mechanism does not.
Although the VA-based mechanism also knows the perfect locations of PAs, several VAs may be missed over time because only two PAs are available, as shown in Fig. \ref{fig:slambenchmark} (a).
When ${t=236}$,
the localization error of the MVA-based mechanism is around $0.05$ m, and that of the proposed mechanism is around $0.4$ m.
The SLAM result for a single simulation is shown in Figs. \ref{fig:slambenchmark} (b) and (c) for MVA-based and proposed mechanisms, respectively.

With regard to mapping, as shown in Fig. \ref{fig:benchmark} (b), 
the proposed mechanism has the highest convergence speed of mapping (when ${t = 3}$). The MVA-based mechanism converges when ${t=30}$, and the VA-based mechanism has the lowest convergence speed (when ${t = 45}$).
The distributions of PAs and VRPs are relatively accurate with the help of active sensing when ${t=1}$ in the proposed mechanism,
whereas the MVA-based mechanism has no prior information about VRPs (or MVAs).
As mentioned in Section \ref{gm}, the number of VAs is larger than that of VRPs (or MVAs).
Given that MVAs are not introduced in the VA-based mechanism, VA-based mechanism has a slower convergence speed because it has more features than the MVA-based mechanism.
After ${t=120}$, 
the MOSPA error of the MVA-based mechanism is reduced corresponding to the turning of the trajectory.
After ${t=180}$, the fifth wall is detected, similar to the explanation in Section \ref{ex}.
Therefore, the MOSPA error of the MVA-based mechanism is around $0.1$ m after ${t=180}$, and that of the proposed mechanism is around $1$ m.
The difference in the performance of the MVA-based and proposed mechanisms is due to the accuracy of the PAs.
Given that several VAs are missed over time in VA-based mechanism, the MOSPA error gradually increases after ${t=150}$.
Although the proposed mechanism has a certain performance loss compared with the MVA-based mechanism, it does not need any prior information about the PAs, which expands the application scenarios of the proposed mechanism. 

Finally, we summarize the distinct performance of active and passive sensing to make the conclusion clear. 
Active sensing can be performed without establishing a connection in the communication network, thus, it plays an important role in the initial stage.
However, a finite number of discrete beams are used in the beam sweeping phase. Therefore, active sensing can only obtain the information from fixed beam directions.	
On the contrary, passive sensing requires the establishment of communication links. 
Passive sensing can obtain more measurements with better quality than active sensing, because channel estimation is conducted more frequently than beam sweeping in accordance with 5G NR standards, and more pilot resources are allocated to the channel estimation phase.
However, clock and orientation bias are involved in practice.
The proposed hybrid mechanism gives full play to the respective advantages of active and passive sensing.

\section{Conclusion}

In this study, we integrated active and passive sensing for SLAM in wireless communication systems.
Specifically, active sensing was realized by the beam sweeping, and passive sensing was implemented by DL-PRS.
We adopted the idea of VRP, which characterizes the state of the reflective surface.
Therefore, the results of active and passive sensing can be transformed into VRPs.
An uncertainty model of the VRP obtained by active sensing was established to provide the mean and variance of the estimated VRP for soft information fusion with passive sensing.
Next, we extended the classic BP SLAM mechanism by realizing PA initialization with the assistance of active sensing, and achieving VRP and PA refinement with the help of passive sensing.
The numerical results showed that the proposed mechanism works successfully in realistic scenarios without any prior information about the floor plan, anchors, or agents.
Compared with active or passive sensing only mechanisms, the proposed mechanism can mutually enhance the two sensing modes and bring a significant performance gain. For future work, more types of measurement, including angle and Doppler, and unknown measurement bias, such as clock and orientation bias, should be considered to extend the proposed SLAM mechanism. Hybrid precoding and multi-beam sweeping can be studied for active sensing.  
Different materials of reflective surface and multi-user scenarios can also be considered.
We can establish prototype verification systems and collect experimental data to verify the proposed algorithms.

\begin{appendices}
		
\section{Proof of Theorem \ref{T1}}\label{A} 
In this section, we derive the distance uncertainty.
Since we have
\begin{equation}\label{partiald}
\dfrac{\partial {r}_{n,m} }{\partial d_m} \!=\! \frac{b_{n,m}}{d_m^2}e^{-j2\pi  (n-\frac{N_{\rm s}}{2})  \Delta f \frac{2d_m}{c}} \!\!\bigg[-\frac{2}{d_m}-j2\pi \big(n-\frac{N_{\rm s}}{2}\big) \Delta f \frac{2}{c}\bigg],
\end{equation}
and
\begin{equation}\label{partiald2}
\dfrac{\partial {r}_{n,m}^* }{\partial d_m} = \frac{b_{n,m}}{d_m^2}e^{j2\pi (n-\frac{N_{\rm s}}{2}) \Delta f \frac{2d_m}{c}} \bigg[-\frac{2}{d_m}+j2\pi \big(n-\frac{N_{\rm s}}{2}\big) \Delta f \frac{2}{c}\bigg].
\end{equation}
Then, we obtain
\begin{equation}\label{partiald3}
 \dfrac{\partial {r}_{n,m}^* }{\partial d_m}\dfrac{\partial {r}_{n,m} }{\partial d_m} = \frac{b_{n,m}^2}{d_m^4} \bigg[\frac{4}{d_m^2}+4\pi^2 \big(n-\frac{N_{\rm s}}{2}\big)^2 \Delta f^2 \frac{4}{c^2}\bigg].
\end{equation}
Therefore, we get
\begin{multline}\label{fimfinal}
{\rm F}(d_m) = \frac{2}{\sigma^2}\mathcal{R} \bigg\{\dfrac{\partial \textbf{r}_{m}^{\text{H}} }{\partial d_m}  \dfrac{\partial \textbf{r}_{m} }{\partial d_m} \bigg\}\\
=\frac{2}{\sigma^2}\mathcal{R} \bigg\{ \sum_{n=1}^{N_{\rm s}}\frac{b_{n,m}^2}{d_m^4} \bigg[\frac{4}{d_m^2}+4\pi^2 \big(n-\frac{N_{\rm s}}{2}\big)^2 \Delta f^2 \frac{4}{c^2}\bigg]\bigg\}.
\end{multline}
Given $\sum_{n=1}^{N_{\rm s}}\big(n-\frac{N_{\rm s}}{2}\big)^2=\dfrac{ N_{\rm s}^3+2N_{\rm s}}{12}$, and we assume that $\lambda_n\approx \lambda$ for mmWave frequencies, therefore, we have $\lvert b_{n,m} \rvert^2 = \frac{\lambda_n^2 \varepsilon_m}{(4\pi)^3} \approx \frac{\lambda^2 \varepsilon_m}{(4\pi)^3}$,
after some derivations, we get
\begin{equation}\label{fimappr}
{\rm F}(d_m) = \frac{2}{\sigma^2} \bigg[
\frac{\lambda^2\varepsilon_m \Delta f^2 N_{\rm s}^3}{48\pi c^2 d_m^4}   +\frac{\lambda^2\varepsilon_m\Delta f^2 N_{\rm s}}{24\pi c^2 d_m^4} + \frac{\lambda^2\varepsilon_m N_{\rm s}}{16\pi^3 d_m^6} \bigg].
\end{equation}    
Then, we have 
\begin{equation}\label{fimappr1}
{\rm F}(d_m) = \frac{\lambda^2\varepsilon_m \Delta f^2N_{\rm s}^3}{24\pi\sigma^2c^2 d_m^4} \bigg[ 1 + \frac{2}{N_{\rm s}^2} + \frac{3 c^2 }{\pi^2 d_m^2N_{\rm s}^2 \Delta f^2} \bigg].
\end{equation} 
When $N_{\rm s} \gg 1$, we have 
\begin{equation}\label{fimappr2}
{\rm F}(d_m) = \frac{\lambda^2\varepsilon_m \Delta f^2 N_{\rm s}^3 }{24\pi\sigma^2c^2 d_m^4} \bigg[ 1 + O\left(\dfrac{1}{N_{\rm s}}\right) \bigg].
\end{equation} 
Therefore, we assume that 
\begin{equation}\label{fimappro}
{\rm F}(d_m) \approx 
\frac{\lambda^2\varepsilon_m \Delta f^2 N_{\rm s}^3}{24\pi\sigma^2c^2 d_m^4}.
\end{equation}
Define ${\rm SNR}_{m} = \dfrac{\lvert  b_{n,m} \rvert^2}{d_m^4 \sigma^2} \approx \dfrac{ \lambda^2 \varepsilon_m }{(4\pi)^3 d_m^4 \sigma^2}$, and $B = N_{\rm s}\Delta f$, where $B$ is the bandwidth, we have
\begin{equation}\label{variance1}
{\rm var}\{\hat{d}_m\} \geq  \bigg( \dfrac{8\pi^2B^2N_{\rm s}}{3c^2} {\rm SNR}_{m} \bigg)^{-1}.
\end{equation}

\section{Proof of Theorem \ref{T2}}\label{B} 
In this section, we derive the first-order Taylor approximation of the VRP.
For a quaternion function $f(x_1, x_2, y_1, y_2)$, where ${x_1 \rightarrow \mu_{x1}, x_2 \rightarrow \mu_{x2}, y_1 \rightarrow \mu_{y1}, y_2 \rightarrow \mu_{y2}}$, let $\bm{\mu}=(\mu_{x1},\mu_{x2},\mu_{y1},\mu_{y2})$, there holds
\begin{multline}
f(x_1, x_2, y_1, y_2) = f({\bm{\mu}}) + \dfrac{\partial f }{\partial x_1} \bigg|_{\bm{\mu}}\!\!\!(x_1-\mu_{x1}) + \dfrac{\partial f }{\partial x_2} \bigg|_{\bm{\mu}}\!\!\!(x_2-\mu_{x2}) \\
+ \dfrac{\partial f }{\partial y_1} \bigg|_{\bm{\mu}}(y_1-\mu_{y1}) + \dfrac{\partial f }{\partial y_2} \bigg|_{\bm{\mu}}(y_2-\mu_{y2}) + o(\rho),
\end{multline} 
where 
\begin{equation}
\rho \!= \! \! \sqrt{(x_1  \!- \!\mu_{x1})^2+(x_2 \!- \!\mu_{x2})^2+(y_1 \!- \!\mu_{y1})^2+(y_2 \!- \!\mu_{y2})^2},
\end{equation} 
and $\rho \rightarrow 0$.
According to \eqref{x} and \eqref{y}, we have
\begin{equation}\label{xappro}
x \approx \frac{a_0}{b_0}+\mathbf{w}\mathbf{q}_x, \ y \approx \frac{c_0}{b_0}+\mathbf{w}\mathbf{q}_y,
\end{equation}
where  
\begin{equation}
\hspace{-1.82cm}\mathbf{w}=[x_1-\mu_{x1},x_2-\mu_{x2},y_1-\mu_{y1},y_2-\mu_{y2}],
\end{equation}
\begin{equation}
\mathbf{q}_x
\!\!=\!\!\bigg[\dfrac{a_1b_0\!-\!a_0b_1}{b_0^2},\dfrac{a_2b_0\!-\!a_0b_2}{b_0^2},\dfrac{a_3b_0\!-\!a_0b_3}{b_0^2},\dfrac{a_4b_0\!-\!a_0b_4}{b_0^2}\bigg]^{\rm T}\!\!, \end{equation}
and
\begin{equation}  
\mathbf{q}_y\!\!=\!\!\bigg[\dfrac{c_1b_0\!-\!c_0b_1}{b_0^2}, \dfrac{c_2b_0\!-\!c_0b_2}{b_0^2}, \dfrac{c_3b_0\!-\!c_0b_3}{b_0^2},\dfrac{c_4b_0\!-\!c_0b_4}{b_0^2}\bigg]^{\rm T},
\end{equation}
with
\begin{align}
a_0 = & x_{\rm rp}(\mu_{x2}-\mu_{x1})^2 - (x_{\rm rp}-2\mu_{x1})(\mu_{y2}-\mu_{y1})^2 \nonumber \\
& + 2(y_{\rm rp}-\mu_{y1})(\mu_{y2}-\mu_{y1})(\mu_{x2}-\mu_{x1}),\\
a_1 = & -2x_{\rm rp}(\mu_{x2}-\mu_{x1}) + 2(\mu_{y2}-\mu_{y1})^2 \nonumber\\
& - 2(y_{\rm rp}-\mu_{y1})(\mu_{y2}-\mu_{y1}),\\ 
a_2 = & 2x_{\rm rp}(\mu_{x2}-\mu_{x1}) + 2(y_{\rm rp}-\mu_{y1})(\mu_{y2}-\mu_{y1}),\\  
a_3 = & 2(x_{\rm rp}-2\mu_{x1})(\mu_{y2}-\mu_{y1}) \nonumber\\
& +
2(\mu_{x2}-\mu_{x1})(-\mu_{y2}-y_{\rm rp}+2\mu_{y1}),\\  
a_4 = & -2(x_{\rm rp}-2\mu_{x1})(\mu_{y2}-\mu_{y1}) \nonumber\\
& + 2(y_{\rm rp}-\mu_{y1})(\mu_{x2}-\mu_{x1}),\\
b_0 = & (\mu_{x2}-\mu_{x1})^2 + (\mu_{y2}-\mu_{y1})^2,\\
b_1 = & -2(\mu_{x2}-\mu_{x1}), \\
b_2 = & 2(\mu_{x2}-\mu_{x1}), \\   
b_3 = & -2(\mu_{y2}-\mu_{y1}), \\
b_4 = & 2(\mu_{y2}-\mu_{y1}), 
\end{align}
\begin{align}  
c_0 = & y_{\rm rp}(\mu_{y2}-\mu_{y1})^2 - (y_{\rm rp}-2\mu_{y1})(\mu_{x2}-\mu_{x1})^2 \nonumber\\
& + 2(x_{\rm rp}-\mu_{x1})(\mu_{y2}-\mu_{y1})(\mu_{x2}-\mu_{x1}),\\
c_1 = & 2(y_{\rm rp}-2\mu_{y1})(\mu_{x2}-\mu_{x1})\nonumber\\
& + 2(\mu_{y2}-\mu_{y1})(-\mu_{x2}-x_{\rm rp}+2\mu_{x1}),\\  
c_2 = & -2(y_{\rm rp}-2\mu_{y1})(\mu_{x2}-\mu_{x1}) \nonumber\\
& + 2(x_{\rm rp}-\mu_{x1})(\mu_{y2}-\mu_{y1}),\\
c_3 = & -2y_{\rm rp}(\mu_{y2}-\mu_{y1}) + 2(\mu_{x2}-\mu_{x1})^2 \nonumber\\
& - 2(x_{\rm rp}-\mu_{x1})(\mu_{x2}-\mu_{x1}),\\
c_4 = & 2y_{\rm rp}(\mu_{y2}-\mu_{y1}) + 2(x_{\rm rp}-\mu_{x1})(\mu_{x2}-\mu_{x1}).
\end{align}

\end{appendices}

\bibliographystyle{IEEEtran}
\bibliography{bibsample}

\end{document}